\newenvironment{breakablealgorithm}
  {
   \begin{center}
     \refstepcounter{algorithm}
     \hrule height.8pt depth0pt \kern2pt
     \renewcommand{\caption}[2][\relax]{
       {\raggedright\textbf{\fname@algorithm~\thealgorithm} ##2\par}%
       \ifx\relax##1\relax 
         \addcontentsline{loa}{algorithm}{\protect\numberline{\thealgorithm}##2}%
       \else 
         \addcontentsline{loa}{algorithm}{\protect\numberline{\thealgorithm}##1}%
       \fi
       \kern2pt\hrule\kern2pt
     }
  }{
     \kern2pt\hrule\relax
   \end{center}
  }
\pgfplotsset{compat=1.18} 
\newcommand{\code}[1]{\texttt{#1}}
\newcommand{\system}[1]{\textsc{#1}}
\newcommand{\eofex}{\hfill$\blacksquare$}
\newcommand{\union}{\cup}
\newcommand{\set}[1]{\{#1\}}
\newcommand{\rg}[3]{{#1}\,{#2}\ldots{#2}\,{#3}}
\newcommand{\eset}[2]{\set{\rg{#1}{,}{#2}}}
\newcommand{\lequiv}{\leftrightarrow}
\newcommand*{\PL}{\mathrm{PL}}
\newcommand*{\size}{size}
\newcommand*{\Nset}{\mathbb{N}}
\newcommand*{\err}{\mathrm{err}}
\DeclareMathOperator{\DDelta}{\Delta}
\newtheorem{theorem}{Theorem}
\newtheorem{proposition}[theorem]{Proposition}
\newtheorem{lemma}[theorem]{Lemma}
\theoremstyle{definition}
\newtheorem{example}[theorem]{Example}
\date{}
\begin{document}
\title{Short Boolean Formulas as  Explanations in Practice
}

\author[1]{Reijo Jaakkola}
\author[1]{Tomi Janhunen}
\author[1,2]{Antti Kuusisto}
\author[1]{Masood~Feyzbakhsh~Rankooh}
\author[1,\footnote{The 
authors are given in the alphabetical order.}]{Miikka Vilander}
\affil{Tampere University, Finland}
\affil[2]{University of Helsinki, Finland}

%
%
%
%

%
\maketitle 

\begin{abstract}
\noindent
We investigate explainability via short Boolean formulas in the data model based on unary relations. As an explanation of length $k$, we take a Boolean formula of length $k$ that minimizes the error with respect to the target attribute to be explained. We first provide novel quantitative bounds for the expected error in this scenario. We then also demonstrate how the setting works in practice by studying three concrete data sets. In each case, we calculate explanation formulas of different lengths using an encoding in Answer Set Programming. The most accurate formulas we obtain achieve errors similar to other methods on the same data sets. However, due to overfitting, these formulas are not necessarily ideal explanations, so we use cross validation to identify a suitable length for explanations. By limiting to shorter formulas, we obtain explanations that avoid overfitting but are still reasonably accurate and also, importantly, human interpretable. 
\end{abstract}


\section{Introduction}

In this article we investigate explainability and classification via short Boolean formulas. As the data model, we use multisets of propositional assignments. This is one of the simplest data representations available---consisting simply of data points and properties---and corresponds precisely to relational models with unary relations. The data is given as a model $M$ with unary relations $p_1,\dots , p_k$ over its domain $W$, and furthermore, there is an additional \emph{target predicate} $q\subseteq W$. As classifiers for recognizing $q$, we produce Boolean formulas $\varphi$ over $p_1,\dots , p_k$, and the corresponding error is then the percentage of points in $W$ that disagree on $\varphi$ and $q$ over $W$. For each formula length $\ell$, a formula producing the minimum error is chosen as a candidate classifier. Longer formulas produce smaller errors, and ultimately the process is halted based on cross validation which shows that the classifier formulas $\varphi$ begin performing significantly better on training data in comparison to test data, suggesting overfitting.

Importantly, the final classifier formulas $\varphi$ tend to be short and therefore \emph{explicate} the global behaviour of the classifier $\varphi$ itself in a transparent way. This leads to \emph{inherent interpretability} of our approach. Furthermore, the formulas $\varphi$ can \emph{also} be viewed as \emph{explanations} of the target predicate $q$. By limiting to short formulas, we obtain explanations (or classifiers) that avoid overfitting but are still reasonably accurate and also---importantly---human interpretable. 
It is also worth mentioning that, in addition to target attributes $q$ that occur in data, we could also explain a target $q$ that arises from the decisions of a (possibly black box) classifier. Altogether, the short Boolean formulas $\varphi$ we find can be understood in at least three different ways:
\begin{itemize}
    \item classifiers for target predicates $q$ in data,
    \item explanations of target predicates $q$ in data, 
    \item explanations of external classifiers, the decisions of which are given by $q$.
\end{itemize}

The last case involves a range of different approaches---used for different purposes---for searching
the formulas: see Section \ref{sec:explaining-a-classifier} for further details. 




Our contributions include theory, implementation and empirical results. We begin with some theory on the errors of Boolean formulas as explanations.
We first investigate general reasons behind overfitting when using Boolean formulas. We also observe, for example, that if all distributions are equally likely, the expected \emph{ideal theoretical error} of a distribution is $25\%$. The ideal theoretical error is the error of an ideal Boolean classifier for the entire distribution. We proceed by proving novel, quantitative upper and lower bounds on the expected \emph{ideal empirical error} on a data set sampled from a distribution. The ideal empirical error is the smallest error achievable on the data set. Our bounds give concrete information on sample sizes required to avoid overfitting.

We also compute explanation formulas in practice. We use three data sets from the UCI machine learning repository: Statlog (German Credit Data), Breast Cancer Wisconsin (Original) and Ionosphere.  
We obtain results comparable to 
other experiments in the literature. 
In one set of our experiments, the 
empirical errors for the obtained classifiers for the credit, breast cancer and ionosphere data are 0.27, 0.047 and 0.14. The corresponding formulas are surprisingly short, with lengths 6, 8 and 7, respectively. This makes them highly interpretable. The length 6 formula
for the credit data (predicting if a
loan will be granted) 
is 

\medskip

{\centering
$\displaystyle \neg(a[1,1] \wedge a[2]) \vee a[17,4],$\par}

\medskip

\noindent
where $a[1,1]$ means negative account balance; 
$a[2]$ means above median loan duration; 
and $a[17,4]$ means employment on managerial level.
Our errors are comparable to those obtained for the same data sets in the literature. For example, \cite{YangW01}
obtains an error 0.25 
for the credit data where our 
error is 0.27. Also, all our formulas are 
immediately interpretable. 
See Section \ref{section:results} for further discussion. 


On the computational side, we deploy answer set programming (ASP; see, e.g.,
\cite{BET11:jacm,JN16:aimag}) where the solutions of a search problem are described
declaratively in terms of rules such that the \emph{answer sets} of the resulting logic program correspond to the solutions of the problem. Consequently, dedicated search engines, known as
\emph{answer-set solvers},
provide means to solve the problem via the computation of answer
sets. The \system{Clasp} \cite{GKK0S15:lpnmr} and \system{Wasp}
\cite{ADLR15:lpnmr} solvers represent the state-of-the art of native
answer set solvers, providing a comparable performance in practice.
These solvers offer various reasoning modes---including prioritized
optimization---which are deployed in the sequel, e.g., for the
minimization of error and formula length. Besides these features, we
count on the flexibility of rules offered by ASP when describing
explanation tasks. More information on the technical side of ASP
can be found from the de-facto reference manual \cite{GKKS12:book}
of the \system{Clingo} system.

The efficiency of explanation is governed by the number of hypotheses considered basically in two ways. Firstly, the search for a plausible explanation requires the exploration of the hypothesis space and, secondly, the exclusion of better explanations becomes a further computational burden, e.g., when the error with respect to data is being minimized.
In computational learning approaches (cf.~\cite{Mitchell82:aij}), such as \emph{current-best-hypothesis search} and \emph{version space learning}, a hypothesis in a normal form is maintained while minimizing the numbers of false positive/negative examples. However, in this work, we tackle the hypothesis space somewhat differently: we rather specify the form of hypotheses and delegate their exploration to an (optimizing) logic solver. In favor of interpretability, we consider formulas based on negations, conjunctions, and disjunctions, not necessarily in a particular normal form. By changing the form of hypotheses, also other kinds of explanations such as decision trees \cite{Quinlan86ml} or lists could alternatively be sought.

Concerning further related work, our bounds on the difference between theoretical and empirical error are technically related to results in PAC learning \cite{IntroCompLearningTheory,Valiant84:cacm}. In PAC learning, the goal is to use a sample of labeled points drawn from an unknown distribution to find a hypothesis that gives a small true error with high probability.
The use of hypotheses that have small descriptions has also been considered in the PAC learning literature in relation to the principle of Occam's razor \cite{BEHW87:ipl,BEHW89:jacm,BP92:tcs}. One major difference between our setting and PAC learning is that in the latter, the target concept is a (usually Boolean) function of the attribute values, while in our setting we only assume that there exists a probability distribution on the propositional types over the attributes.

Explanations relating to minimality notions in relation to different 
Boolean classifiers have been studied widely, see for example \cite{ShihCD18} for \emph{minimum-cardinality} and \emph{prime implicant} explanations, also in line with Occam's razor \cite{BEHW87:ipl}.
Our study relates especially to global (or general \cite{shortfor}) explainability, where the full behaviour of a classifier is explained instead of 
a decision concerning a particular input instance. Boolean complexity---the length of the shortest equivalent formula{\hspace{0.01pt}}---is promoted in the prominent article \cite{feldman} as an empirically tested measure of the subjective difficulty of a concept. On a 
conceptually 
related note, intelligibility of various Boolean classifiers are studied in \cite{audemard}. While that study places,  e.g., DNF-formulas to the less intelligible category based on the complexity of explainability queries  performed on classifiers, we note that with genuinely small bounds for classifier length, asymptotic complexity can sometimes be a somewhat problematic measure for intelligibility. In our study, the bounds arise already from the overfitting thresholds in real-life data. 
In the scenarios we studied, overfitting indeed sets natural, small bounds for classifier length. In inherently Boolean data, such bounds can be fundamental and cannot be always ignored via using different classes of classifiers. 
The good news 
is that while a length bound may be \emph{necessary} to avoid overfitting, \emph{shorter length increases interpretability}. This is important from the point of theory as well as applications.

The article is organized as follows.
After the preliminaries in Section \ref{section:preliminaries},
we present theoretical results on errors in
Section \ref{section:expected-errors}.
Section \ref{section:implementation} explains the 
ASP implementation.
In Section \ref{section:results} we present and interpret the empirical results.

\section{Preliminaries}
\label{section:preliminaries}

The syntax of propositional logic $\PL[\sigma]$ over the vocabulary $\sigma = \{p_1, \dots, p_m\}$ is given by
$\varphi ::= p \mid \neg \varphi \mid \varphi \land \varphi \mid \varphi \lor \varphi$
where $p \in \sigma$.  We also define the exclusive or $\varphi \oplus \psi := (\varphi \lor \psi) \land \neg(\varphi \land \psi)$ as an abbreviation. A \textbf{$\sigma$-model} is a
structure $M = (W,V)$ where $W$ is a finite, non-empty set referred to as
the \textbf{domain} of $M$ and $V:\sigma \rightarrow \mathcal{P}(W)$ is a
\textbf{valuation} function that assigns each $p\in \sigma$ the set $V(p)$ (also denoted by $p^M$) of
points $w\in W$ where $p$ is
considered to be true.  

A $\sigma$-valuation $V$ can be extended in the standard way to a valuation $V: \PL[\sigma] \to \mathcal{P}(W)$ for all $\PL[\sigma]$-formulas. We write $w\models \varphi$ if 
$w\in V(\varphi)$ and say that $w$ \textbf{satisfies} $\varphi$. We denote by $|\varphi|_M$ the \emph{number} of points $w \in W$ where $\varphi \in \PL[\sigma]$ is true. For $\sigma$-formulas $\varphi$ and $\psi$, we write $\varphi\models\psi$ iff for all $\sigma$-models $M = (W, V)$ we have $V(\psi) \subseteq V(\varphi)$.
Let $\mathit{lit}(\sigma)$ denote the set of \textbf{$\sigma$-literals}, i.e., formulas $p$ and $\neg p$ for $p\in \sigma$.
A \textbf{$\sigma$-type base} is a set $S\subseteq \mathit{lit}(\sigma)$ 
such that for each $p\in \sigma$, precisely one 
of the literals $p$ and $\neg p$ is in $S$. A \textbf{$\sigma$-type} is a
conjunction $\bigwedge S$. We assume some fixed bracketing and ordering of literals in $\bigwedge S$ so there is a one-to-one correspondence between type bases and types. The set of $\sigma$-types is denoted by $T_{\sigma}$. Note that in a $\sigma$-model $M = (W,V)$, each element $w$ satisfies precisely one $\sigma$-type, so the domain $W$ is partitioned by some subset of $T_{\sigma}$. The \textbf{size} $\size(\varphi)$ of a formula $\varphi \in \PL[\sigma]$ is defined such that $\size(p) = 1,$ 
$\size(\neg \psi) = \size(\psi) + 1,$
and $\size(\psi \land \vartheta) = \size(\psi \lor \vartheta) = \size(\psi) + \size(\vartheta) + 1$. 
%
%
%

%
%
%

We will use short propositional formulas as \emph{explanations} of target attributes in data.
Throughout the paper, we shall use the vocabulary $\tau = \{p_1,\dots , p_k\}$ for the language of explanations, while $q\not \in \tau$ will be the target 
attribute (or target proposition) to be explained. While the set of $\tau$-types will be denoted by $T_{\tau}$, we let $T_{\tau,q}$ denote the set of $(\tau\cup \{q\})$-types in the extended language $\PL[\tau\cup\{q\}]$.

By a probability distribution over a 
vocabulary $\sigma$, or 
simply a $\sigma$-distribution, we mean a
function $\mu_{\sigma} : T_{\sigma} \rightarrow [0,1]$ that gives a probability to each type in $T_{\sigma}$. We are mainly interested in such distributions over $\tau$ and $\tau\cup \{q\}$. For notational convenience, we may write $\mu_{\tau,q}$ or simply $\mu$ instead of $\mu_{\tau\cup\{q\}}$. 
In the theoretical part of the paper, we assume that the studied data (i.e., $(\tau\cup \{q\})$-models) are sampled using such a distribution $\mu$.  

We then define some notions of error for explanations. 
Let $\tau = \{p_1, \dots, p_k\}$. 
Fix a probability distribution $\mu:T_{\tau, q} \rightarrow [0,1]$. 
Let $\varphi$
and $\psi$ be $(\tau \cup \{q\})$-formulas.
The \textbf{probability of $\varphi$ over $\mu$} is
defined as

\medskip

{\centering
$
\displaystyle \Pr\nolimits_{\mu}(\varphi) := 
\sum\limits_{t\in T_{\tau,q}\hspace{0.1mm},\ t\models\varphi} \mu(t). 
$
\par}

\smallskip

%
%
%
\noindent The \textbf{probability of $\psi$
given $\varphi$ over $\mu$} is
defined as $\Pr_{\mu}(\psi\, |
\, \varphi) := \frac{\Pr_\mu(\psi\wedge\varphi)}{\Pr_{\mu}(\varphi)}$
(and $0$ if $\Pr_\mu(\varphi) = 0$).
For simplicity, we may write $\mu(\varphi)$ for $\Pr_{\mu}(\varphi)$
and $\mu(\psi\, |\, \varphi)$
for $\Pr_{\mu}(\psi\, |\, \varphi)$.
Let $M = (W,V)$ be a $(\tau \cup \{q\})$-model. 
The \textbf{probability of $\varphi$
over $M$} is
$\Pr_M(\varphi):=\frac{1}{|W|}|\varphi|_M$, and 
the \textbf{probability of $\psi$ given $\varphi$ over $M$} is defined as $\Pr_M(\psi\, |\, \varphi) := \frac{|\psi\wedge\varphi|_M}{|\varphi|_M}$
(and $0$ if $\Pr_M(\varphi) = 0$).
The disjunction $\varphi_{id}^M
:= \bigvee \{\, t\in T_{\tau}\, |\, 
 \Pr_M(q\, |\, t) \geq \frac{1}{2} \}$ is
 the \textbf{ideal 
classifier} w.r.t. $M$, 
and the disjunction $\varphi_{id}^{\mu}
:= \bigvee \{\, t\in T_{\tau}\, |\, 
 \mu(q\, |\, t) \geq \frac{1}{2} \}$ is
 the \textbf{ideal 
classifier} w.r.t. $\mu$.

Now, let $\psi \in \PL[\tau]$. The \textbf{theoretical error} (or \textbf{true error}) of $\psi$ with respect to $\mu$ is
$
\err_\mu(\psi) := \Pr_\mu(\psi \oplus q).
$
The \textbf{ideal theoretical error of $\mu$} is
\begin{align*}
  \err(\mu) := \min_{\psi \in \PL[\tau]} \err_\mu(\psi)
= \err_{\mu}(\varphi_{id}^{\mu})
&= \Pr_\mu(\varphi_{id}^{\mu} \oplus q) \\
&=
\sum_{t \in T_\tau} \min\{\mu(t \land q), \mu(t \land \neg q)\}.  
\end{align*}

\medskip

\noindent 
Let $M$ be a $(\tau \cup \{q\})$-model. The \textbf{empirical error} of $\psi$ with respect to $M$ is $\err_M(\psi) := \Pr_M(\psi \oplus q).$
%
%
%
%
%
%
%
%
%
%
%
%
%
%
%
The \textbf{ideal empirical error} of $M$ is
\begin{align*}
    \mathrm{err}(M) := \min_{\psi \in \PL[\tau]} \mathrm{err}_M(\psi)
= \err_M(\varphi_{id}^M)
&= \Pr_M(\varphi_{id}^{\mu} \oplus q) \\
&=
\frac{1}{|W|} \sum_{t\, \in\, T_\tau} \min \{|t \land q|_M, |t \land \neg q|_M\}.
\end{align*}

\medskip

\noindent
For a $\tau$-type $t$,  
the \textbf{ideal
error over $t$ w.r.t. $\mu$} is $\min\{\mu(q\, |\, t), 
\mu(\neg q\, |\, t)\}$. 
The \textbf{ideal error over $t$ w.r.t. $M$} is 
$\min\{\Pr_M(q\, |\, t), \Pr_M(\neg q\, |\, t)\}$.

The main problem studied in this paper is the following: over a
$(\tau\cup\{q\})$-model $M$, given a
bound $\ell$ on formula length, find $\psi$
with $\size(\psi) \leq \ell$ and with 
minimal empirical error w.r.t. $M$. 
This can be formulated as a \emph{general explanation problem} (GEP) in
the sense of \cite{shortfor}; see in particular 
the extended problems in \cite{shortforarxiv}. The goal in GEP is to 
explain the \emph{global} behaviour of a classifier rather than a 
reason why a particular instance was accepted or rejected.

Finally, we define $\mathit{cut}: [0,1]\rightarrow [0,\frac{1}{2}]$ to be
the function such that $\mathit{cut}(x) = x$ if $x\leq \frac{1}{2}$ and otherwise $\mathit{cut}(x) = 1 - x$.

\section{Expected errors}
\label{section:expected-errors}

In this section we consider the errors given by 
Boolean classifiers, including the phenomena
that give rise to the errors. 
With no information on 
the distribution $\mu : T_{\tau, q} \rightarrow [0,1]$, it is
difficult to predict the error of a 
classifier $\varphi$ in $PL[\tau]$. 
However, some observations can be made. Consider the scenario where all distributions $\mu$
are equally likely, meaning that we consider the 
flat Dirichlet distribution $Dir(\alpha_1,\dots , \alpha_{|T_{\tau, q}|})$
with each $\alpha_i$ equal to $1$,
i.e., the distribution that is uniform over its support which, in
turn, is 
the $(|T_{\tau, q}|-1)$-simplex. For 
more 
on Dirichlet distributions, see \cite{kotz2004continuous}. We begin with the 
following 
observation.
%
%
%
%
%
%
%

\begin{proposition}\label{errorproposition}
Assuming all distributions over $\tau\cup\{q\}$ are equally likely, the expected value of the ideal theoretical error is $0.25$. Also, for 
any type $t\in T_{\tau}$ and any $\mu_{\tau}$
with $\mu_{\tau}(t) > 0$, if 
all extensions $\mu$ of $\mu_{\tau}$
to a $(\tau\cup \{q\})$-distribution 
are equally likely, 
the expectation of the ideal error over $t$
w.r.t. $\mu$ is likewise $0.25$.  
\end{proposition}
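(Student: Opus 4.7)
The plan is to reduce both claims to the elementary calculation $\mathbb{E}[\min(U, 1-U)] = 2\int_0^{1/2} u\, du = 1/4$ for $U$ uniform on $[0,1]$, and then identify the relevant random variables as uniform using standard properties of the Dirichlet distribution.

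I would prove the second statement first, as it is strictly simpler and feeds into the first. Once $\mu_\tau$ is fixed, every extension $\mu$ to a $(\tau\cup\{q\})$-distribution is determined by choosing $\mu(q \mid t)$ in $[0,1]$ independently for each $t \in T_\tau$ with $\mu_\tau(t) > 0$ (on types with $\mu_\tau(t) = 0$ the extension is forced). Declaring all extensions equally likely therefore makes $\mu(q \mid t)$ uniform on $[0,1]$, and the ideal error over $t$ is $\min(\mu(q \mid t), \mu(\neg q \mid t)) = \min(U, 1-U)$, whose expectation is $1/4$.

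For the first statement, I would rewrite $\err(\mu) = \sum_{t \in T_\tau} \mu_\tau(t)\, \min(\mu(q \mid t), \mu(\neg q \mid t))$ and take expectations under the flat Dirichlet on the simplex of distributions over $T_{\tau,q}$. The cleanest route is to condition on $\mu_\tau$: the standard aggregation property of Dirichlet gives that $(\mu_\tau(t))_{t \in T_\tau}$ is $\mathrm{Dir}(2,\dots,2)$, and that conditional on $\mu_\tau$ the induced distribution of $\mu$ on the fiber of extensions is uniform. The second statement then yields $\mathbb{E}[\err(\mu) \mid \mu_\tau] = \sum_t \mu_\tau(t) \cdot 1/4 = 1/4$, and the unconditional expectation is the same. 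A direct alternative is to use $\mathbb{E}[\mu_\tau(t)] = 2/2^{k+1} = 2^{-k}$ together with independence of the within-pair ratio from the aggregate to obtain $\mathbb{E}[\mu_\tau(t) \min(\mu(q \mid t), \mu(\neg q \mid t))] = 2^{-k} \cdot 1/4$, and sum over the $2^k$ types.

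The only nontrivial ingredient is the Dirichlet decomposition (aggregates are Dirichlet, and within-group ratios are independent uniforms given the aggregate); this is standard and can be cited from \cite{kotz2004continuous}. Once that fact is invoked, both claims reduce to the single one-line integral above, so the proof should be short.
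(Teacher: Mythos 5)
Your proposal is correct and follows essentially the same route as the paper: the second claim is the same one-line computation $\int_0^1 \min(x,1-x)\,dx = \tfrac14$, and for the first claim the paper likewise writes $\err(\mu)=\sum_{t}\mu_\tau(t)\,\mathit{cut}(\mu(q\mid t))$, uses linearity plus independence of $\mu_\tau(t)$ and $\mu(q\mid t)$ under the flat Dirichlet, and multiplies $\mathbb{E}[\mu_\tau(t)]=1/|T_\tau|$ by $\tfrac14$ --- exactly your \emph{direct alternative}. The only cosmetic difference is that you justify the independence via the explicit Dirichlet aggregation/neutrality decomposition (and phrase the main route as conditioning on $\mu_\tau$), whereas the paper simply asserts it.
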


\begin{proof}
We prove the second claim first. 
Fix a $\mu$ and $t$.
If $x = \mu(q\, |\, t)$, then the ideal error over $t$ w.r.t. $\mu$ is given by $\mathit{cut}(x)$. 
%
%
%
%
%
%
Therefore the corresponding 
expected value is given by 
\[\frac{1}{1-0}\int_0^1 \mathit{cut}(x)\, dx\ = \ \int_{0}^{\frac{1}{2}} x\, dx\ +\
\int_{\frac{1}{2}}^{1} (1 - x)\, dx\ =\ \frac{1}{4}.\] 
%
%
%
%
%
%
This proves the second claim. 
Based on this, it is not difficult to show
that the also the first claim holds;
the full 
details are given in the Appendix.
\end{proof}


One of the main problems with Boolean 
classifiers is that the number of
types is exponential in the vocabulary
size, i.e., the curse of dimensionality. This leads to
overfitting via overparametrisation; even if
the model $M$ is faithful to an 
underlying distribution $\mu$, 
classifiers $\varphi_{id}^M$ tend to give 
empirical errors that are significantly
smaller than the
theoretical ones for $\mu$. To see why,
notice that in the 
extreme case where $|t|_M = 1$ for each $t\in T_{\tau, q}$, the ideal
empirical error of $M$ is zero. In general, when the sets $|t|_M$ are small, ideal classifiers $\varphi_{id}^M$ benefit from that. Let us consider this
issue quantitatively. 
Fix $\mu$ and $t\in T_{\tau}$.
For a model $M$, let  $\err(M,t)$ refer to
the ideal error over $t$ w.r.t. $M$. 
Consider models $M$ sampled 
according to $\mu$, and let $m\in \mathbb{N}$
and $\mu(q\, |\, t) = p$.  
Now, the expected value $E(m,p)$ of $\err(M,t)$
over those models $M$ where $|t|_M = m$ is given by
\medskip

\smallskip 

{\centering
$
\displaystyle \Bigl(\sum\limits_{0\, <\, k\, \leq \, m/2 }
\binom{m}{k}p^k(1-p)^{m-k} \cdot \frac{k}{m}\Bigr) 
+ \Bigl(\sum\limits_{m/2\, <\, k\, < \, m }
\binom{m}{k}p^k(1-p)^{m-k} \cdot \frac{(m - k)}{m}\Bigr).
$
\par}

\medskip 

\smallskip

\noindent
Now for example $E(4,0.7) = 0.2541$
and $E(2,0.7) = 0.21$, both significantly lower
than $\mathit{cut}(p) = \mathit{cut}(0.7) = 0.3$ which is the 
expected value of $\err(M,t)$ when 
the size restriction $|t|_M = m$ is lifted and models of increasing size are sampled according to $\mu$.
Similarly, we 
have $E(4,0.5) = 0.3125$
and $E(2,0.5) = 0.25$, significantly 
lower than $\mathit{cut}(p) =
\mathit{cut}(0.5) = 0.5$. A natural way to
avoid this phenomenon is to limit formula size, the strategy adopted in this paper. This also naturally leads to
shorter and thus more interpretable formulas.


We next estimate empirical errors for general Boolean classifiers (as opposed to single types). The \textbf{expected ideal empirical error of} $\mu$ is simply the expectation $\mathbb{E}(\mathrm{err}(M))$ of $\mathrm{err}(M)$, where $M$ is a model of size $n$ sampled according to $\mu$. One can show that $\mathbb{E}(\mathrm{err}(M)) \leq \mathrm{err}(\mu)$ and that $\mathbb{E}(\mathrm{err}(M)) \to \mathrm{err}(\mu)$ as $n \to \infty$. Thus it is natural to ask how the size of the difference $\mathrm{err}(\mu) - \mathbb{E}(\mathrm{err}(M))$, which we call the \textbf{expected error gap}, depends on $n$.

In the remaining part of this section we establish bounds on the expected ideal empirical error, which in turn can be used to give bounds on the expected error gap. Since expectation is linear, it suffices to give bounds on 
\begin{equation}\label{eq:expected_empirical_error}
    \frac{1}{n}\sum_{t \in T_\tau} \mathbb{E}\min\{|t \land q|_M,|t \land \neg q|_M\},
\end{equation}
where $M$ is a model of size $n$ which is sampled according to $\mu$. Here, for each type $t\in T_\tau$, $|t \land q|_M$ and $|t \land \neg q|_M$ are random variables that are distributed according to $\mathrm{Binom}(n,\mu(t \land q))$ and $\mathrm{Binom}(n,\mu(t \land \neg q))$ respectively. Since $|t \land q|_M + |t \land \neg q|_M = |t|_M$, we can replace $|t \land \neg q|_M$ with $|t|_M - |t \land q|_M$.

To simplify (\ref{eq:expected_empirical_error}), we will first use the law of total expectation to write it as
\begin{equation}\label{eq:conditioned_expected_empirical_error}
    \frac{1}{n} \sum_{t \in T_\tau} \sum_{m = 0}^n \mathbb{E}(\min\{|t\land q|_M,m - |t\land q|_M\} \mid |t|_M = m) \cdot \Pr(|t|_M = m).
\end{equation}
For each $0 \leq m \leq n$ and $t \in T_\tau$ we fix a random variable $X_{m,t,q}$ distributed according to $\mathrm{Binom}(m,\mu(q|t))$, where $\mu(q|t) := \mu(t \land q)/\mu(t)$. In the Appendix we show that (\ref{eq:conditioned_expected_empirical_error}) equals
\begin{equation}\label{eq:simplified_expected_empirical_error}
    \frac{1}{n} \sum_{t \in T_\tau} \sum_{m = 0}^n \mathbb{E}\min\{X_{m,t,q}, \ m - X_{m,t,q}\} \cdot \Pr(|t|_M = m).
\end{equation}
To avoid dealing directly with the expectation of a minimum of two Binomial random variables, we will simplify it using the identity $\min\{a,b\} = \frac{1}{2}(a + b - |a - b|).$
In the Appendix we show that using this identity on (\ref{eq:simplified_expected_empirical_error}) gives the form
\begin{equation}\label{eq:alternative_expected_empirical_error_formula}
    \frac{1}{2} - \frac{1}{n} \sum_{t \in T_\tau} \sum_{m = 0}^n \mathbb{E}\bigg|X_{m,t,q} - \frac{m}{2}\bigg| \cdot \Pr(|t|_M = m).
\end{equation}
In the above formula the quantity $\mathbb{E}|X_{m,t,q} - \frac{m}{2}|$ is convenient 
since we can bound it from above using the standard deviation of $X_{m,t,q}$. Some further estimates and algebraic manipulations in the Appendix suffice to prove the following result.

\begin{theorem}\label{thm:expected_empirical_error_lower_bound}
    Expected ideal empirical error is bounded from below by
    \[\err(\mu) - \frac{1}{\sqrt{n}} \sum_{t \in T_\tau} \sqrt{\mu(q|t)(1 - \mu(q|t))\mu(t)}.\]
\end{theorem}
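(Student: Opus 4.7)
The plan is to bound from above the subtracted term in (\ref{eq:alternative_expected_empirical_error_formula}), which will immediately yield the desired lower bound on the expected ideal empirical error. Write $p_t := \mu(q|t)$ for brevity. Since $X_{m,t,q} \sim \mathrm{Binom}(m,p_t)$ has mean $m p_t$, a single application of the triangle inequality gives
$$\mathbb{E}\bigg|X_{m,t,q} - \frac{m}{2}\bigg| \;\leq\; \mathbb{E}|X_{m,t,q} - m p_t| \;+\; m\bigg|p_t - \frac{1}{2}\bigg|.$$
I would then bound $\mathbb{E}|X_{m,t,q} - mp_t|$ by the standard deviation $\sqrt{\mathrm{Var}(X_{m,t,q})} = \sqrt{m p_t(1-p_t)}$; this is the usual consequence of Jensen's inequality (equivalently Cauchy--Schwarz) applied to the concave function $x\mapsto \sqrt{x}$ and the random variable $(X_{m,t,q} - mp_t)^2$.

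Next I would take the weighted sum over $m$ with weights $\Pr(|t|_M = m)$. Since $|t|_M \sim \mathrm{Binom}(n,\mu(t))$, we have $\mathbb{E}|t|_M = n\mu(t)$, so the contribution of the second summand above is exactly $n\mu(t)|p_t - 1/2|$. For the first summand, another application of Jensen's inequality using concavity of $\sqrt{\cdot}$ gives $\mathbb{E}\sqrt{|t|_M}\leq \sqrt{\mathbb{E}|t|_M} = \sqrt{n\mu(t)}$, so its contribution is at most $\sqrt{n\mu(t)\,p_t(1-p_t)}$. Dividing by $n$ and summing over $t\in T_\tau$ then yields
$$\frac{1}{n}\sum_{t \in T_\tau}\sum_{m=0}^n \mathbb{E}\bigg|X_{m,t,q} - \frac{m}{2}\bigg|\Pr(|t|_M = m) \;\leq\; \frac{1}{\sqrt{n}}\sum_{t \in T_\tau}\sqrt{p_t(1-p_t)\mu(t)} \;+\; \sum_{t \in T_\tau}\mu(t)\bigg|p_t - \frac{1}{2}\bigg|.$$

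The last step is to recognize the remaining sum in closed form via the $\mathit{cut}$ function. Since $\mathit{cut}(p) = \min\{p,1-p\} = \frac{1}{2} - |p - \frac{1}{2}|$ and $\err(\mu) = \sum_{t \in T_\tau} \mu(t)\,\mathit{cut}(p_t)$, we obtain $\sum_{t \in T_\tau}\mu(t)|p_t - 1/2| = \frac{1}{2} - \err(\mu)$. Substituting this back into (\ref{eq:alternative_expected_empirical_error_formula}) produces exactly the claimed bound. The main subtlety is ensuring that the two invocations of Jensen's inequality point in the correct direction, so that together they produce an upper bound on the subtracted quantity (and hence a lower bound on the expected empirical error); the final identification of $\frac{1}{2} - \err(\mu)$ via $\mathit{cut}$ is then a clean algebraic step that ties everything back to $\err(\mu)$.
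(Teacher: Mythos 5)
Your proposal is correct and follows essentially the same route as the paper's proof: the same triangle-inequality split of $\mathbb{E}|X_{m,t,q}-\frac{m}{2}|$, the same two applications of Jensen's inequality (mean absolute deviation versus standard deviation, and $\mathbb{E}\sqrt{|t|_M}\leq\sqrt{n\mu(t)}$), and the same closing identity $\sum_{t}\mu(t)\,|\mu(q|t)-\tfrac{1}{2}| = \tfrac{1}{2}-\err(\mu)$. No gaps.
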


We note that Theorem \ref{thm:expected_empirical_error_lower_bound} implies immediately that the expected error gap is bounded from above by
$\frac{1}{\sqrt{n}} \sum_{t \in T_\tau} \sqrt{\mu(q|t)(1 - \mu(q|t))\mu(t)} \leq \frac{1}{2}\sqrt{\frac{|T_\tau|}{n}}.$
This estimate yields quite concrete sample bounds. For instance, if we are using three attributes to explain the target attribute (so $|T_\tau| = 8$) and we want the expected error gap to be at most $0.045$, then a sample of size at least $1000$ suffices. For the credit data set with $1000$ data points, this means that if three attributes are selected, then the 
(easily computable)
ideal empirical error
gives a good idea of the ideal theoretical error for those three attributes.

Obtaining an upper bound on the expected ideal empirical error is much more challenging, since in general it is not easy to give good lower bounds on $\mathbb{E}|X - \lambda|$, where $X$ is a binomial random variable and $\lambda > 0$ is a real number. Nevertheless we were able to obtain the following result.

\begin{theorem}\label{thm:upper_bound_expected_empirical_error}
    Expected ideal empirical error is bounded from above by
    \medskip
    
    {\centering
    $\displaystyle\frac{1}{2} - \frac{1}{\sqrt{8n}}\sum_{n\mu(t) \geq 1} \sqrt{\mu(t)} + \frac{1}{2\sqrt{8}n}\sum_{n\mu(t) \geq 1} \frac{1 - \mu(t)}{\sqrt{n\mu(t)}} - \frac{1}{\sqrt{8}} \sum_{n\mu(t) < 1} \mu(t)(1 - \mu(t))^n.$\par}
    
    \medskip
\end{theorem}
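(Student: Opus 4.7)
My plan is to lower-bound the quantity $A := \frac{1}{n}\sum_t \sum_m \mathbb{E}|X_{m,t,q} - m/2| \cdot \Pr(|t|_M = m)$ appearing in (\ref{eq:alternative_expected_empirical_error_formula}), since the expected ideal empirical error equals $\frac{1}{2} - A$. The crucial lemma is a uniform lower bound
\[
\mathbb{E}|X_{m,t,q} - m/2| \;\geq\; \sqrt{m/8}
\]
valid for every $m \geq 0$ and every $p = \mu(q\,|\,t) \in [0,1]$. To establish this, I would first observe that $p \mapsto \mathbb{E}|X_{m,t,q} - m/2|$ is the degree-$m$ Bernstein polynomial of the convex function $x \mapsto m|x - 1/2|$, hence convex in $p$, and symmetric under $p \leftrightarrow 1-p$, so its minimum is attained at $p = 1/2$. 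At $p = 1/2$ we have $X_{m,t,q} - m/2 = \frac{1}{2}\sum_{i=1}^m \varepsilon_i$ for independent Rademacher $\varepsilon_i$, and Szarek's sharp form of Khintchine's inequality yields $\mathbb{E}|\sum \varepsilon_i| \geq \sqrt{m/2}$, giving the claim. Summing against $\Pr(|t|_M = m)$ reduces the task to $A \geq \frac{1}{n\sqrt{8}} \sum_t \mathbb{E}\sqrt{|t|_M}$, i.e. to a lower bound on $\mathbb{E}\sqrt{Y}$ for $Y := |t|_M \sim \mathrm{Binom}(n, \mu(t))$.

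I would then split the sum over $t$ according to the partition in the theorem. For the rare case $n\mu(t) < 1$, the trivial bound $\sqrt{Y} \geq \mathbf{1}[Y \geq 1]$ gives $\mathbb{E}\sqrt{Y} \geq 1 - (1-\mu(t))^n$, and the elementary inequality $(1-p)^n(1 + np) \leq 1$ -- which follows because $p \mapsto (1-p)^n(1+np)$ has derivative $-np(n+1)(1-p)^{n-1} \leq 0$ and equals $1$ at $p = 0$ -- yields $\frac{1}{n}(1 - (1-\mu(t))^n) \geq \mu(t)(1-\mu(t))^n$, producing exactly the rare-type term on the right-hand side of the theorem.

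For the frequent case $n\mu(t) \geq 1$ I would use the algebraic decomposition
\[
\sqrt{Y} - \sqrt{np} \;=\; \frac{Y - np}{\sqrt{Y} + \sqrt{np}} \;=\; \frac{Y - np}{2\sqrt{np}} \;-\; \frac{(Y - np)^2}{2\sqrt{np}\,(\sqrt{Y} + \sqrt{np})^2},
\]
where the second equality comes from $\frac{1}{\sqrt{Y}+\sqrt{np}} = \frac{1}{2\sqrt{np}} - \frac{\sqrt{Y} - \sqrt{np}}{2\sqrt{np}(\sqrt{Y} + \sqrt{np})}$. Taking expectations, the linear term vanishes, and bounding $(\sqrt{Y}+\sqrt{np})^2 \geq np$ together with $\mathbb{E}(Y - np)^2 = np(1-p)$ yields $\mathbb{E}\sqrt{Y} \geq \sqrt{np} - \frac{1-p}{2\sqrt{np}}$. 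Setting $p = \mu(t)$ and dividing by $n\sqrt{8}$ gives exactly the frequent-type contribution, and combining the two cases delivers the stated upper bound.

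The main obstacle is the uniform lower bound $\mathbb{E}|X - m/2| \geq \sqrt{m/8}$: once Bernstein-polynomial convexity localizes the minimum to $p = 1/2$, the sharp Khintchine constant $1/\sqrt{2}$ (Szarek's theorem) is the nontrivial input. The remaining steps are algebraic identities and elementary monotonicity/variance estimates that can be relegated to the appendix.
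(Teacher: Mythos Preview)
Your proof is correct and follows the same three-step skeleton as the paper: reduce via (\ref{eq:alternative_expected_empirical_error_formula}) to lower-bounding $\mathbb{E}|X_{m,t,q}-m/2|$ by $\sqrt{m/8}$, then lower-bound $\mathbb{E}\sqrt{|t|_M}$ separately on the two regimes $n\mu(t)\gtrless 1$. The differences are in how the auxiliary lemmas are established, and they are worth recording.

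For the step $\mathbb{E}|X_{m,t,q}-m/2|\ge\sqrt{m/8}$, the paper first argues via a separate proposition that $\mathbb{E}\min\{Y,m-Y\}$ is maximized at $p=1/2$ and then cites a result of Pelekis for the value at $p=1/2$. Your Bernstein-polynomial convexity argument is a clean alternative localization to $p=1/2$, and invoking Szarek's sharp Khintchine constant gives the same $\sqrt{m/8}$; this is self-contained but relies on a heavier classical tool. For the frequent-type bound $\mathbb{E}\sqrt{Y}\ge\sqrt{np}-\frac{1-p}{2\sqrt{np}}$, the paper uses the pointwise inequality $\sqrt{x}\ge 1+\frac{x-1}{2}-\frac{(x-1)^2}{2}$ applied to $X/\mathbb{E} X$, whereas your algebraic decomposition of $\sqrt{Y}-\sqrt{np}$ followed by $(\sqrt{Y}+\sqrt{np})^2\ge np$ reaches the identical conclusion without an auxiliary polynomial inequality. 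For the rare-type term, the paper keeps only the contribution of $\{Y=1\}$, obtaining $\mathbb{E}\sqrt{Y}\ge np(1-p)^{n}$ directly; you instead use $\sqrt{Y}\ge\mathbf{1}[Y\ge 1]$ and then the monotonicity $(1-p)^n(1+np)\le 1$ to land on the same expression $\mu(t)(1-\mu(t))^n$. Both routes are valid; yours gives a slightly stronger intermediate bound before matching the theorem's stated form.
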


The proof of Theorem \ref{thm:upper_bound_expected_empirical_error} in the Appendix can be divided into three main steps. First, we observe that the expected ideal empirical error is maximized when $\mu(q|t) = 1/2$, for every $t \in T_\tau$, in which case $\mathbb{E}(X_{m,t,q}) = \frac{m}{2}$. Then, we use a very recent result of \cite{PELEKIS2016305} to obtain a good lower bound on the value $\mathbb{E}|X_{m,t,q} - \mathbb{E}(X_{m,t,q})|$. Finally, after some algebraic manipulations, we are left with the task of bounding $\mathbb{E}(\sqrt{|t|_M})$ from below, which we achieve by using an estimate that can be obtained from the Taylor expansion of $\sqrt{x}$ around $1$.

To get a concrete feel for the lower bound of Theorem \ref{thm:upper_bound_expected_empirical_error}, consider the case where $\mu(q|t) = 1/2$, for every $t \in T_\tau$. In this case a \emph{rough} use of Theorem \ref{thm:upper_bound_expected_empirical_error} implies that the expected error gap is bounded from below by

\medskip

{\centering
$\displaystyle \frac{1}{\sqrt{8}} \sum_{n\mu(t) < 1} \mu(t)(1 - \mu(t))^n \geq \frac{1}{\sqrt{8}e} \cdot \frac{(n-1)}{n} \cdot \sum_{n\mu(t) < 1} \mu(t).$\par}

\medskip

\noindent
This lower bound very much depends on the properties of the distribution $\mu$, but one can nevertheless make general remarks about it. For instance, if $|T_\tau|$ is much larger than $n$ and $\mu$ is not concentrated on a small number of types (i.e., its Shannon entropy is not small), then we except $\sum_{n\mu(t) < 1} \mu(t)$ to be close to one. Thus the above bound would imply that in this scenario the generalization gap is roughly $1/(\sqrt{8} \cdot e) \approx 0.13$, which is a significant deviation from zero.

\section{An Overview of the Implementation in ASP}
\label{section:implementation}

In this section, we describe our proof-of-concept implementation of the search for short formulas explaining data sets. Our implementation presumes Boolean attributes only and \emph{complete} data sets having no missing values. 
%
%
In the following, we highlight the main ideas behind our ASP encoding in terms of code snippets in the Gringo syntax \cite{GKKS12:book}.
The complete encoding will be published under the ASPTOOLS collection%
\footnote{\url{https://github.com/asptools/benchmarks}}
along with some preformatted data sets for testing purposes.
Each data set is represented in terms of a predicate \code{val(D,A,V)} with three arguments: \code{D} for a data point identifier, \code{A} for the name of an attribute, and \code{V} for the value of the attribute \code{A} at \code{D}, i.e., either \code{0} or \code{1} for Boolean data.
%

\begin{lstlisting}[label=code:domains,frame=single,float=t,escapechar=?,caption={Encoding the syntactic structure of hypotheses}]
% Domains
#const l=10.                            ?\label{line:length}?
node(1..l).  root(l).  op(neg;and;or).  ?\label{line:domains}?
data(D) :- val(D,A,B).                  ?\label{line:data}?
attr(A) :- val(D,A,B).                  ?\label{line:attr}?

% Choose the actual length
{used(N)} :- node(N).                   ?\label{line:used}?
used(N+1) :- used(N), node(N+1).        ?\label{line:used-next}?
used(N) :- root(N).                     ?\label{line:used-root}?

%  Choose leaf nodes and inner nodes, and label them
{leaf(N)} :- used(N).                   ?\label{line:leaf}?
inner(N) :- used(N), not leaf(N).       ?\label{line:inner}?
{ op(N,O): op(O) } = 1 :- inner(N).     ?\label{line:operator}?
{ lat(N,A): attr(A) } = 1 :- leaf(N).   ?\label{line:leaf-attr}? 
\end{lstlisting}

Given a data set based on attributes $\rg{a_0}{,}{a_n}$ where $a_n$ is the target of explanation, the hypothesis space is essentially the propositional language $\PL[\tau]$ with the vocabulary $\tau=\eset{a_0}{a_{n-1}}$. Thus, the goal is to find a definition $a_n\lequiv\varphi$ where $\varphi\in\PL[\tau]$ with the least error. To avoid obviously redundant hypotheses, we use only propositional connectives from the set $C=\set{\neg,\land,\lor}$ and represent formulas in the so-called \emph{reversed Polish notation}. This notation omits parentheses altogether and each formula $\varphi$ is encoded as a sequence $\rg{s_1}{,}{s_k}$ of symbols where $s_i\in\tau\union C$ for each $s_i$. Such a sequence can be transformed into a formula by processing the symbols in the given order and by pushing formulas on a \emph{stack} that is empty initially. If $s_i\in\tau$, it is pushed on the stack, and if $s_i\in C$, the arguments of $s_i$ are popped from the stack and the resulting formula is pushed on the stack using $s_i$ as the connective. Eventually, the result appears as the only formula on top of stack. For illustration, consider the sequence $a_2,a_1,\land,\neg,a_0,\lor$ referring to attributes $a_0$, $a_1$, and $a_2$. The stack evolves as follows:
$a_2$ $\mapsto$
$a_2,a_1$ $\mapsto$
$(a_1\land a_2)$ $\mapsto$
$\neg(a_1\land a_2)$ $\mapsto$
$\neg(a_1\land a_2),a_0$ $\mapsto$
$a_0\lor\neg(a_1\land a_2)$.
Thus, the formula is $a_0\lor\neg(a_1\land a_2)$.
For a formula $\varphi$, the respective sequence of symbols can be found by traversing the syntax tree of $\varphi$ in the \emph{post order}. There are also malformed sequences not corresponding to any formula.

\begin{lstlisting}[label=code:stack,frame=single,float=t,escapechar=?,caption={Checking the syntax using a stack}]
% Check the size of the stack
count(N,0) :- used(N), not used(N-1).                       ?\label{line:count-base}?
count(N+1,K+1) :- leaf(N), count(N,K), node(N), K>=0, K<=2. ?\label{line:count-leaf}?
count(N+1,K) :- count(N,K), node(N), op(N,neg).             ?\label{line:count-neg}?
count(N+1,K-1) :- count(N,K), node(N), op(N,O), O!=neg.     ?\label{line:count-other}?
:- not count(l+1,1).                                        ?\label{line:count-final}?

%  The step-by-step evolution of the stack
stack(N+1,K+1,N) :- leaf(N), count(N,K), K>=0, K<=2.         ?\label{line:stack-leaf}?
stack(N+1,K,  N) :- op(N,neg), count(N,K), K>0, K<=3.        ?\label{line:stack-unary}?
stack(N+1,K-1,N) :- op(N,O), O!=neg, count(N,K), K>=2, K<=3. ?\label{line:stack-binary}?

stack(N+1,I,  M) :- leaf(N), count(N,K), I>=0, I<=K, stack(N,I,M). ?\label{line:copy-leaf}?
stack(N+1,I,  M) :- op(N,neg), count(N,K), I>0, I<K, stack(N,I,M). ?\label{line:copy-unary}? 
stack(N+1,1,  M) :- op(N,O), O!=neg, count(N,3), stack(N,1,M).     ?\label{line:copy-binary}?
\end{lstlisting}

Based on the reverse Polish representation, the first part of our encoding concentrates on the generation of hypotheses whose syntactic elements are defined in Listing \ref{code:domains}. In Line \ref{line:length}, the maximum length of the formula is set, as a global parameter \code{l} of the encoding, to a default value \code{10}. Other values can be issued by the command-line option \code{-cl=<number>}. Based on the value chosen, the respective number of \emph{nodes} for a syntax tree is defined in Line \ref{line:domains}, out of which the last one is dedicated for the \emph{root}. The three Boolean \emph{operators} are introduced using the predicate \code{op/1}. The data points and attributes are extracted from data in Lines \ref{line:data} and \ref{line:attr}, respectively. To allow explanations shorter than \code{l}, the choice rule in Line \ref{line:used} may take any node into use (or not). The rule in Line \ref{line:used-next} ensures that all nodes with higher index values up to \code{l} are in use. The root node is always in use by Line \ref{line:used-root}. The net effect is that the nodes \code{i..l} taken into use determine the actual length of the hypothesis. Thus the length may vary between \code{1} and \code{l}.
In a valid syntax tree, the nodes are either \emph{leaf} or \emph{inner} nodes, see Lines \ref{line:leaf} and \ref{line:inner}, respectively. Each inner node is assigned an operator in Line \ref{line:operator} whereas each leaf node is assigned an attribute in Line \ref{line:leaf-attr}, to be justified later on.

The second part of our encoding checks the syntax of the hypothesis using a stack, see Listing \ref{code:stack}. Line \ref{line:count-base} resets the size of the stack in the first used node. The following rules in Lines \ref{line:count-leaf}--\ref{line:count-other} take the respective effects of attributes, unary operators, and binary operators into account. The constraint in Line \ref{line:count-final} ensures that the count reaches \code{1} after the root node. Similar constraints limit the size of the stack: at most two for leaf nodes and at least one/two for inner nodes with a unary/binary connective.
The predicate \code{stack/3} propagates information about arguments to operators, i.e., the locations \code{N} where they can be found. Depending on node type, the rules in Lines \ref{line:stack-leaf}--\ref{line:stack-binary} create a new reference that appears on top of the stack at the next step \code{N+1} (cf. the second argument \code{K+1}, \code{K}, or \code{K-1}). The rules in Lines \ref{line:copy-leaf}--\ref{line:copy-binary} copy the items under the top item to the next step \code{N+1}.

The third part of our encoding evaluates the chosen hypothesis at data points \code{D} present in the data set given as input. For a leaf node \code{N}, the value is simply set based on the value of the chosen attribute \code{A} at \code{D}, see Line \ref{line:true-leaf}. For inner nodes \code{N}, we indicate a choice of the truth value in Line \ref{line:true-inner}, but the choice is made deterministic in practice by the constraints in Lines \ref{line:eval-or1}--\ref{line:eval-or3}, illustrating the case of the \code{or} operator. The constraints for the operators \code{neg} and \code{and} are analogous.

\begin{lstlisting}[label=code:evaluate,frame=single,float=t,escapechar=?,caption={Evaluating the hypothesis at data points}]
true(D,N) :- data(D), leaf(N), lat(N,A), val(D,A,1). ?\label{line:true-leaf}?
{true(D,N)} :- data(D), used(N), inner(N).           ?\label{line:true-inner}?

% Constraints for disjunctions
:- data(D), op(N,or), count(N,I), stack(N,I-1,N3),   ?\label{line:eval-or1}?
   true(D,N), not true(D,N-1), not true(D,N3).
:- data(D), op(N,or), not true(D,N), true(D,N-1).    ?\label{line:eval-or2}?
:- data(D), op(N,or), count(N,I), stack(N,I-1,N2),   
   not true(D,N), true(D,N2).                        ?\label{line:eval-or3}?
\end{lstlisting}

\begin{lstlisting}[label=code:objective,frame=single,float=t,escapechar=?,caption={Encoding the objective function}]
% Compute error
error(D) :- data(D), val(D,A,0), expl(A), true(D,N), root(N).     ?\label{line:error-false-positive}?
error(D) :- data(D), val(D,A,1), expl(A), not true(D,N), root(N). ?\label{line:error-false-negative}?

#minimize { 1@1,D: error(D);  1@0,N: used(N), node(N) }.          ?\label{line:optimality}?
\end{lstlisting}

Finally, Listing \ref{code:objective} encodes the objective function. Lines \ref{line:error-false-positive} and \ref{line:error-false-negative} spot data points \code{D} that are incorrect with respect to the attribute \code{A} being explained and the selected hypothesis rooted at \code{N}. For a false positive \code{D}, the hypothesis is true at \code{D} while the value of \code{A} is \code{0}. In the opposite case, the hypothesis is false while the value of \code{A} at \code{D} is \code{1}.
The criteria for minimization are given in Line \ref{line:optimality}. The number of errors is the first priority (at level \code{1}) whereas the length of the hypothesis is the secondary objective (at level \code{0}). Also, recall that the maximum length has been set as a parameter earlier. The optimization proceeds \emph{lexicographically} as follows: a formula that minimizes the number of errors is sought first and, once such an explanation has been found, the length of the formula is minimized additionally. So, it is not that crucial to set the (maximum) length parameter \code{l} to a particular value: the smaller values are feasible, too, based on the nodes in use.
The performance of our basic encoding can be improved by adding constraints to prune redundant hypotheses, sub-optimal answer sets, and candidates.


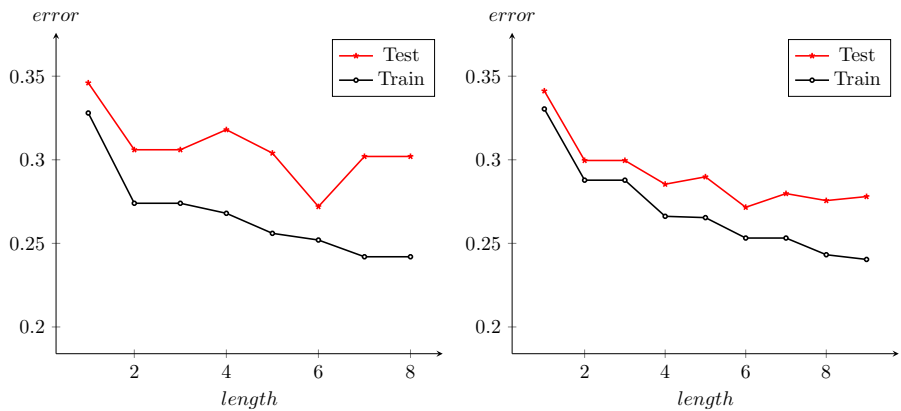
\begin{figure}[thb]
\centering
\begin{minipage}{.5\linewidth}
  \centering
\begin{tikzpicture}[scale=0.75]
\begin{axis}[  
         ymin=100/500, ymax=180/500,    xmin=1, xmax=8,    xlabel={$length$},    ylabel={$error$}, ylabel style={rotate=-90, at={(axis description cs:.08,1.05)}},  axis lines=left,    enlargelimits=true]
\addplot[    color=red,    thick,    mark=star, mark options={fill=white, scale=0.75}] coordinates {
    (1,173/500) (2,153/500) (3,153/500) (4,159/500) (5,152/500) (6,136/500) (7,151/500) (8,151/500)
};
\addplot[    color=black,    thick,    mark=*, mark options={fill=white, scale=0.5}] coordinates {
    (1,164/500) (2,137/500) (3,137/500) (4,134/500) (5,128/500) (6,126/500) (7,121/500) (8,121/500)
};
\legend{Test,Train}
\end{axis}
\end{tikzpicture}
\end{minipage}%
\begin{minipage}{.5\linewidth}
  \centering
  \begin{tikzpicture}[scale=0.75]
\begin{axis}[ 
ymin=100/500, ymax=180/500,    xmin=1, xmax=9,    xlabel={$length$},    ylabel={$error$}, ylabel style={rotate=-90, at={(axis description cs:.08,1.05)}},   axis lines=left,    enlargelimits=true]
\addplot[    color=red,    thick, mark=star, mark options={fill=white, scale=0.75}] coordinates {
    (1,170.6/500) (2,149.8/500) (3,149.8/500) (4,142.7/500) (5,144.9/500) (6,135.8/500) (7,139.9/500) (8,137.8/500) (9,139.00/500) 
};
\addplot[    color=black,    thick, mark=*, mark options={fill=white, scale=0.5}] coordinates {
    (1,165.20/500) (2,143.90/500) (3,143.90/500) (4,133.10/500) (5,132.70/500) (6,126.60/500) (7,126.60/500) (8,121.60/500) (9,120.20/500) 
};
\legend{Test, Train}
\end{axis}
\end{tikzpicture}
\end{minipage}
\caption{Credit data set -- first test (left) and average (right)\label{fig:credit-data}}
\end{figure}

\section{Results from data and interpretation}
\label{section:results}

To empirically analyze short Boolean formulas as explanations and classifiers, we utilize three data sets from the UCI machine learning repository: Statlog (German Credit Data), Breast Cancer Wisconsin (Original) and Ionosphere. The target attributes are given as acceptance of a loan application, benignity of a tumor and ``good'' radar readings, respectively.
The breast cancer data contains a small number of instances with missing attribute values (16 out of 699), which are excluded from the analysis.
%
The original data sets contain categorical and numerical attributes, as well as Boolean ones.
To convert a categorical attribute into Boolean format, we treat the inclusion of instances in each corresponding category as a separate Boolean attribute. For numerical attributes, we use the median 
across all instances as the threshold. Thus the Booleanized credit, breast cancer and ionosphere data sets consist of 1000, 683 and 351 instances, respectively, with 68, 9 and 34 Boolean attributes each, plus the target attribute. 
To evaluate the obtained formulas as classifiers, we randomly divide each data set into two equal parts: one serving as the training data and the other as the test data. For the training data $M$, target predicate $q$ and increasing formula length bounds $\ell$, we produce formulas $\psi$ not involving $q$ with $\size(\psi) \leq \ell$ that minimize the empirical error $\err_M(\psi)$. We also record the error on the test data (i.e., the complement of the training data).
We repeated this process 10 times.
For each data set, Figures \ref{fig:credit-data}, \ref{fig:cancer-data} and \ref{fig:ionosphere-data} record both the first experiment as an example and the average over 10 experiments on separately randomized training and test data sets. We employed \system{Clingo} (v.~5.4.0) as the answer-set solver in all experiments.

For the ionosphere data, the Booleanization via median is rough for the real-valued radar readings. Thus we expect larger errors compared to methods using real numbers. This indeed happens, but the errors are still surprisingly low.

\begin{figure}[!htb]
\centering
\begin{minipage}{.5\linewidth}
  \centering
\begin{tikzpicture}[scale=0.75]
\begin{axis}[    ymin=7/341, ymax=40/341,    xmin=1, xmax=18,    xlabel={$length$},    ylabel={$error$}, ylabel style={rotate=-90, at={(axis description cs:.08,1.05)}},   axis lines=left,    enlargelimits=true, xtick={1,3,5,7,9,11,13,15,17}, yticklabel style={/pgf/number format/.cd, fixed, fixed zerofill, precision=2}]

\addplot[    color=red,    thick,    mark=star, mark options={fill=white, scale=0.75}] coordinates {
    (1,198/341) (2,37/341) (3,37/341) (4,23/341) (5,23/341) (6,18/341) (7,18/341) (8,16/341) (9,16/341) (10,20/341) (11,20/341) (12,19/341) (13,19/341) (14,20/341) (15,20/341) (16,20/341) (17,20/341) (18,20/341)
};
\addplot[    color=black,    thick,    mark=*, mark options={fill=white, scale=0.5}] coordinates {
    (1,187/342) (2,31/342) (3,31/342) (4,22/342) (5,22/342) (6,18/342) (7,18/342) (8,14/342) (9,14/342) (10,12/342) (11,12/342) (12,11/342) (13,11/342) (14,10/342) (15,10/342) (16,10/342) (17,10/342) (18,10/342)
};
\legend{Test,Train}
\end{axis}
\end{tikzpicture}
\end{minipage}%
\begin{minipage}{.5\linewidth}
  \centering
  \begin{tikzpicture}[scale=0.75]
\begin{axis}[    ymin=7/341, ymax=40/341,    xmin=1, xmax=20,    xlabel={$length$},    ylabel={$error$}, ylabel style={rotate=-90, at={(axis description cs:.08,1.05)}},   axis lines=left,    enlargelimits=true, xtick={1,3,5,7,9,11,13,15,17,19}, yticklabel style={/pgf/number format/.cd, fixed, fixed zerofill, precision=2}]

\addplot[    color=red,    thick, mark=star, mark options={fill=white, scale=0.75}] coordinates {
    (1,191.30/341) (2,36.20/341) (3,36.20/341) (4,22.80/341) (5,22.80/341) (6,22.10/341) (7,22.10/341) (8,20.30/341) (9,20.30/341) (10,15.60/341) (11,15.60/341) (12,15.20/341) (13,15.20/341) (14,14.70/341) (15,15.10/341) (16,15.20/341) (17,15.40/341) (18,15.90/341) (19,15.50/341) (20,16.30/341) 
};
\addplot[    color=black,    thick, mark=*, mark options={fill=white, scale=0.5}] coordinates {
    (1,193.70/342) (2,33.00/342) (3,33.00/342) (4,20.30/342) (5,20.30/342) (6,17.30/342) (7,17.30/342) (8,14.60/342) (9,14.60/342) (10,12.00/342) (11,12.00/342) (12,10.70/342) (13,10.70/342) (14,9.90/342) (15,9.80/342) (16,9.60/342) (17,9.50/342) (18,8.90/342) (19,8.70/342) (20,8.50/342) 
};

\legend{Test,Train}
\end{axis}
\end{tikzpicture}
\end{minipage}
\caption{Breast cancer data set -- first test (left) and average (right)\label{fig:cancer-data}}

\bigskip

\bigskip

\centering
\begin{minipage}{.5\linewidth}
  \centering
\begin{tikzpicture}[scale=0.75]
\begin{axis}[    ymin=13/175, ymax=50/175,    xmin=1, xmax=12,    xlabel={$length$},    ylabel={$error$},  ylabel style={rotate=-90, at={(axis description cs:.08,1.05)}},  axis lines=left,    enlargelimits=true, xtick={1,3,5,7,9,11}]

\addplot[    color=red,    thick,    mark=star, mark options={fill=white, scale=0.75}] coordinates {
    (1,42/175) (2,42/175) (3,38/175) (4,38/175) (5,29/175) (6,29/175) (7,25/175) (8,25/175) (9,40/175) (10,40/175) (11,32/175) (12,32/175)
};
\addplot[    color=black,    thick,    mark=*, mark options={fill=white, scale=0.5}] coordinates {
    (1,46/176) (2,46/176) (3,36/176) (4,36/176) (5,27/176) (6,27/176) (7,23/176) (8,23/176) (9,20/176) (10,20/176) (11,18/176) (12,18/176)
};
\legend{Test,Train}
\end{axis}
\end{tikzpicture}
\end{minipage}%
\begin{minipage}{.5\linewidth}
  \centering
  \begin{tikzpicture}[scale=0.75]
\begin{axis}[    ymin=13/175, ymax=50/175,    xmin=1, xmax=16,    xlabel={$length$},    ylabel={$error$},  ylabel style={rotate=-90, at={(axis description cs:.08,1.05)}},  axis lines=left,    enlargelimits=true, xtick={1,3,5,7,9,11,13,15}]

\addplot[    color=red,    thick, mark=star, mark options={fill=white, scale=0.75}] coordinates {
    (1,44.90/175) (2,44.90/175) (3,41.90/175) (4,43.00/175) (5,32.80/175) (6,32.80/175) (7,31.20/175) (8,33.60/175) (9,31.80/175) (10,30.20/175) (11,29.90/175) (12,28.10/175) (13,28.40/175) (14,26.80/175) (15,29.80/175) 
};
\addplot[    color=black,    thick, mark=*, mark options={fill=white, scale=0.5}] coordinates {
    (1,43.10/176) (2,43.10/176) (3,35.90/176) (4,35.80/176) (5,26.90/176) (6,26.90/176) (7,24.30/176) (8,23.70/176) (9,21.70/176) (10,20.40/176) (11,19.00/176) (12,17.50/176) (13,16.60/176) (14,15.20/176) (15,14.90/176) 
};
\legend{Test,Train}
\end{axis}
\end{tikzpicture}
\end{minipage}
\caption{Ionosphere data set -- first test (left) and average (right)\label{fig:ionosphere-data}}

\medskip

\end{figure}


%
%
%
%
%
%
%

\medskip 

\smallskip

\noindent \textbf{Overfitting and choice of explanations.}
The six plots show how the error rates develop with formula length.
In all plots, the error of the test data eventually stays roughly the same while the error of the training data keeps decreasing.
This illustrates how the overfitting phenomenon ultimately arises.
We can use these results to find a \emph{cut-off point} for the length of the formulas to be used as explanations. Note that this should be done on a case-by-case basis and we show the average plots only to demonstrate trends. For the single tests given on the left in Figures \ref{fig:credit-data}, 
\ref{fig:cancer-data} and 
\ref{fig:ionosphere-data}, 
we might choose the lengths 6, 8 and 7 for the credit, breast cancer and ionosphere data sets, respectively. The errors of the chosen formulas are $0.27$, $0.047$ and $0.14$, respectively. We conclude that by sticking to short Boolean formulas, we can avoid overfitting in a simple way.

\smallskip 

\medskip 

\noindent \textbf{Interpretability.}
A nice feature 
of short Boolean formulas is
their interpretability. Suppose we stop at the formula lengths 6, 8 and 7 suggested above. The related formulas are simple and indeed readable. Consider the formula 

\medskip

{\centering
$\displaystyle \neg(a[1,1] \wedge a[2]) \vee a[17,4]$\par}

\medskip

\noindent of length 6 and a test error of 0.27 obtained from the credit data. The meanings of the attributes are as follows: $a[1,1]$ means the applicant has a checking account with negative balance, $a[2]$ means that the duration of the applied loan is above median, and $a[17,4]$ means the applicant is employed at a management or similar level. (The second number in some attributes refers to original categories in the data.) Therefore the formula states that if an applicant is looking for a short term loan, has money on their account or has a management level job, then they should get the loan. For the breast cancer data set, we choose the formula

\medskip

{\centering
$\displaystyle \neg (((a[1] \wedge a[6]) \vee a[5])  \wedge a[3])$ \par}

\medskip

\noindent of length 8 with test error $0.047$. The meanings of the attributes in the order of appearance in the formula are given as clump thickness, bare nuclei, single epithelial cell size and uniformity of cell shape. 
The full power of Boolean logic is utilized here, in the form of both negation and alternation between conjunction and disjunction. 
Finally, for the ionosphere data set, the formula

\medskip

{\centering
$\displaystyle 
((a[8] \wedge a[12]) \vee a[15]) \wedge a[1]$ \par}

\medskip

\noindent of length 7 and test error $0.14$ is likewise human readable as a formula. However, it must be mentioned again that the data was used here for technical reasons, and the Booleanized
attributes related to radar readings are
difficult to explicate.

Using the power of Boolean logic (i.e., including all the connectives $\neg$, $\wedge$, $\vee$) tends to compress the explanations suitably in addition to giving flexibility in explanations. 
We observe that our experiments gave short, readable formulas. 

\medskip 

\smallskip 


\noindent \textbf{Comparing error rates on test data.}
In \cite{YangW01}, all three data sets we consider are treated with \emph{naive Bayesian classifiers} and error rates  
$0.25$, $0.026$ and $0.10$ are achieved on the test data for the credit, breast cancer and ionosphere data sets, respectively. In \cite{Griffith2003}, the credit data is investigated using neural networks and even there, the best reported error rate 
is $0.24$. In \cite{SterD96}, many different methods are compared on the breast cancer data, and the best error achieved is $0.032$. For the ionosphere data, the original paper \cite{Sigillito} uses neural networks to obtain an error of $0.04$. We can see from the plots that 
very short Boolean formulas can achieve error rates of a similar magnitude on the credit and breast cancer data sets. For the ionosphere data, neural networks
achieve a better error rate, but as explained earlier, this is unsurprising as we use a roughly Booleanized version of the
underlying data. 
We conclude that very short Boolean formulas give surprisingly good error rates compared to other methods. Furthermore, this approach seems inherently interpretable for many different purposes. 

\section{Polynomial time algorithm for learning}

Our empirical results indicate that already a few key attributes seem to suffice for competent explanations. Based on this observation, we briefly envision here, as a future direction, the study and use of the following algorithm. Given a method to choose a small set of promising attributes, we simply compute the \emph{ideal classifier} with respect to those attributes. The resulting formula, while not of minimal length in general, is reasonably small due to the small number of attributes used. One can also minimize the obtained formula at the end to enhance interpretability.

Algorithm \ref{algo:overfitter} describes a more detailed implementation of the learning algorithm. Let $\tau$ be a small set of promising attributes chosen from the initially possibly large set of all attributes. The intuitive idea behind the algorithm is as follows. For every $\tau$-type $t$ that is realized in $M$, we have two counters, $n_t$ and $c_t$. The first counter $n_t$ counts how many times $t$ is realized in $M$, while $c_t$ counts how many times $t \land q$ is realized in $M$. The number $c_t/n_t$ is then the probability $\Pr_M(q|t)$. The ideal classifier $\varphi_{id}^{M}$ can be constructed by taking a disjunction over all the types $t$ for which $c_t/n_t \geq 1/2$.

\begin{breakablealgorithm}
\caption{Compute the ideal classifier $\varphi^M_{id}$}\label{algo:overfitter}
\hspace*{\algorithmicindent} \textbf{Input:} a $(\tau \cup \{q\})$-model $M$
\begin{algorithmic}[1]
\State $T_M \leftarrow \varnothing$ \Comment{All the $\tau$-types realized in $M$ will be stored in the set $T_M$}
\State \textbf{for} $w \in W$ \textbf{do}
\State \quad \quad $t \leftarrow$ the $\tau$-type of $w$
\State \quad \quad \textbf{if} $t \not\in T_M$ \textbf{then}
\State \quad \quad \quad \quad $T_M \leftarrow t$
\State \quad \quad \quad \quad $n_t, c_t \leftarrow 0$
\State \quad \quad $n_t \leftarrow n_t + 1$
\State \quad \quad \textbf{if} $w \in q$ \textbf{then}
\State \quad \quad \quad \quad $c_t \leftarrow c_t + 1$
\State $\varphi_{id}^M \leftarrow \bot$
\State \textbf{for} $t \in T_M$ \textbf{do}
\State \quad \quad \textbf{if} $c_t / n_t \geq 1/2$ \textbf{then}
\State \quad \quad \quad \quad $\varphi_{id}^M \leftarrow \varphi_{id}^M \lor t$
\State \textbf{return $\varphi_{id}^M$}
\end{algorithmic}
\end{breakablealgorithm}
It is clear that this algorithm runs in polynomial time with respect to the size of $M$, which is $\mathcal{O}(|W||\tau|)$. A more precise analysis shows that the running time of this algorithm is $\mathcal{O}(|W||T_M||\tau|),$ where $|T_M|$ counts the number of $\tau$-types that are realized in $M$. Since $|T_M| \leq |W|$, this gives a worst case of $\mathcal{O}(|W|^2|\tau|)$, when every point realizes a different type. If $\tau$ is thought of as being of fixed size, then this is a linear time algorithm.

Notice that if the initial data set $M$ is large enough, the ideal classifier with respect to $M$ is most likely also the ideal classifier with respect to the underlying probability distribution from which $M$ was sampled. As discussed in Section \ref{section:expected-errors}, how large $M$ needs to be depends in an essential way on how many attributes we use. By using few enough attributes, we can expect that the ideal classifier obtained from $M$ will be the true ideal classifier.

\section{Explaining a classifier}\label{sec:explaining-a-classifier}

In this section we discuss how our method can be adapted for explaining the behavior of possibly black box classifiers.

First, we consider explaining the complete Boolean behaviour of a classifier $f: T_\tau \to \{0,1\}$. That is, we want to find a short Boolean formula $\varphi$ that behaves identically to $f$ with respect to most propositional types. For this, we generate a target attribute $q$ by feeding the classifier $f$ one input of each propositional type and recording the decision as $q$. We then compute formulas $\varphi$ of increasing length with minimal error as we did above for target attributes $q$ in data. As before, longer formulas will achieve smaller error, up to the (generally very long) formula that is equivalent to $f$. Note that overfitting is not an issue here as we use all propositional types and we are seeking to explain the precise behaviour of the classifier, rather than an underlying phenomenon to be classified.

As a second case, we consider explaining the classifier $f$ with respect to a probability distribution $\mu : T_\tau \to [0,1]$ of inputs. Note that in the extreme scenario where a formula $\varphi$ is equivalent to $f$, the formula $\varphi$ explains $f$ with respect to any distribution, but this kind of $\varphi$ is likely to be very long. To find shorter explanations, we settle for formulas $\varphi$ that behave similarly to $f$ with respect to inputs from the distribution $\mu$. If we somehow know the distribution $\mu$ beforehand, we can produce a finite sample of inputs that follows $\mu$ and feed that to the classifier $f$ to obtain $q$. More relevantly, if we do not know the distribution $\mu$ exactly, we use data sampled from $\mu$ to obtain $q$ via $f$. After obtaining $q$, we search for formulas $\varphi$ as before. In this case, overfitting can be an issue so cross validation should be used.

Finally we compare short explanations $\varphi$ obtained for $f$ with respect to a distribution $\mu$ to the short explanations $\psi$ obtained via the uniform distribution. While over the distribution $\mu$, the explanations $\varphi$ are generally more accurate, the explanations $\psi$ should nevertheless be most accurate on average over all distributions.

\section{Conclusion}
\label{section:conclusion}

We have studied short Boolean formulas as a platform for
producing explanations and interpretable classifiers. 
We have investigated the theoretical reasons behind 
overfitting and provided related quantitative bounds. 
Also, we have tested the approach with three different 
data sets, where the resulting formulas are indeed
interpretable---all being 
genuinely short---and relatively accurate. 
In general, short formulas may
sometimes be necessary to avoid overfitting, and moreover, shorter length leads to increased interpretability.

Our approach need not limit to 
Boolean formulas only, as we can naturally 
extend our work to general relational data. We can use, e.g., description logics and compute
concepts $C_1,\dots , C_k$ and then perform our
procedure using $C_1,\dots, C_k$, finding 
short Boolean combinations of concepts. This of
course differs from the approach of computing
minimal length formulas in the original description logic, but
can nevertheless be fruitful and interesting. We leave this for
future work. Further future directions include, e.g., knowledge discovery via computing all formulas 
up to some short length $\ell$ with
errors smaller than a given threshold.

\medskip

\medskip

\medskip

\noindent
\textbf{Acknowledgments.}\ \ \ 
Tomi Janhunen, Antti Kuusisto, Masood Feyzbakhsh Rankooh and Miikka Vilander were supported by the Academy of Finland consortium project \emph{Explaining AI via Logic} (XAILOG), grant numbers 345633 (Janhunen) and 345612 (Kuusisto). Antti Kuusisto was also supported by the Academy of Finland project \emph{Theory of computational logics}, grant numbers 324435, 328987 (to December 2021) and 352419, 352420 (from January 2022). The author names of this article have been ordered on the basis of alphabetic order. 


\bibliographystyle{plain}
\bibliography{local}


\newpage
\section{Appendix}

\subsection{The remaining part of
the proof of Proposition \ref{errorproposition}}

\begin{proof}
Let $\mathcal{D}$ denote the set of
distributions $\mu : T_{\tau,q}\rightarrow \mathbb{R}$. 
Define the random variable $T : \mathcal{D}\rightarrow \mathbb{R}$
such that $T(t)$ gives the true error of $t$ with respect to $\mu$. We should
find the expected value $\mathbb{E}(T)$ of $T$ with
respect to the flat Dirichlet distribution over $\mathcal{D}$. 
To this end, we fix
the following random 
variables of type $\mathcal{D} \rightarrow \mathbb{R}$. First, for 
each $s\in T_{\tau}$, 
let $X_s:\mathcal{D}
\rightarrow \mathbb{R}$ be the function 
such that $X_s(\mu) = \mu_{\tau}(s)$, so $X_s$ simply 
gives the probability of $s$. 
Similarly, for each $s\in T_{\tau}$,
define $Z_s:\mathcal{D}\rightarrow \mathbb{R}$ 
such that $Z_s(\mu) = \mathit{cut}(\mu( q\, |\, s))$, i.e.,
letting $p$ denote the conditional
probability of $q$ given $s$ holds, the
output is $p$ if $p\leq \frac{1}{2}$ and otherwise $(1-p)$. 
Now we have 
$\mathbb{E}(T) = \mathbb{E}(\sum_{s\, \in\, T_{\tau}}X_sZ_s) 
= \sum_{s\, \in\, T_{\tau}}\mathbb{E}(X_sZ_s)$,
where we have used the linearity of $\mathbb{E}$. 
Now, since $X_s$ gives the
probability of $s$ and $Z_s$ the error associated with $q$
when $s$ holds, we notice that $X_s$ and $Z_s$ are independent when the 
distributions in $\mathcal{D}$ are
equally likely. 
Due to the independence, we 
have $\mathbb{E}(X_s Z_s) = \mathbb{E}(X_s)\mathbb{E}(Z_s)$.
Therefore we have $\mathbb{E}(T) = \sum_{s\, \in\, T_{\tau}}
\mathbb{E}(X_s) \mathbb{E}(Z_s)$. 
Now, consider some $s\in T_{\tau, q}$. 
First, $\mathbb{E}(X_s) = \frac{1}{|T_{\tau}|}$.
Secondly, if we let $x$ 
denote $\mu( q\, |\, s) = \frac{\mu(q \wedge s)}{\mu(s)}$, 
then $\mathbb{E}(Z_s)$ is the 
expected value $\frac{1}{1-0}\int_0^1\mathit{cut}(x)dx$ of $\mathit{cut}(x)$, which we 
have already computed to be $\frac{1}{4}$. 
Therefore we have $\mathbb{E}(T) = \sum_{s\, \in\, T_{\tau}}
\frac{1}{|T_{\tau}|}\cdot \frac{1}{4}
= \frac{1}{4}$.  
\end{proof}

\subsection{Proof that (\ref{eq:conditioned_expected_empirical_error}) equals (\ref{eq:simplified_expected_empirical_error})}

\begin{proof}
    We want to establish the following identity
    \begin{align*}
        &\frac{1}{n} \sum_{t \in T_\tau} \sum_{m = 0}^n \mathbb{E}(\min\{|t\land q|_M,m - |t\land q|_M\} \mid |t|_M = m) \cdot \Pr(|t|_M = m) \\
        &= \frac{1}{n} \sum_{t \in T_\tau} \sum_{m = 0}^n \mathbb{E}\min\{X_{m,t,q}, m - X_{m,t,q}\} \cdot \Pr(|t|_M = m)
    \end{align*}
    Since
    \begin{align*}
    &\mathbb{E}(\min\{|t\land q|_M,m - |t\land q|_M\} \mid |t|_M = m) \\
    &= \sum_{k = 0}^n \min\{k,m - k\} \cdot \Pr(|t \land q|_M = k \mid |t|_M = m) \\
    &= \sum_{k = 0}^m \min\{k,m - k\} \cdot \Pr(|t \land q|_M = k \mid |t|_M = m)
    \end{align*}
    it suffices to show that
    \[\Pr(|t \land q|_M = k \mid |t|_M = m) = \Pr(X_{m,t,q} = k),\]
    for every $0 \leq k \leq m$. First, we have that
    \[\Pr(X_{m,t,q} = k) := \binom{m}{k} \mu(q|t)^k (1 - \mu(q|t))^{m-k}\]
    Recalling that $|t \land q|_M$ is distributed according to $\mathrm{Binom}(n,\mu(q \land t))$, we can calculate $\Pr(|t \land q|_M = k \mid |t|_M = m)$ as follows.
    \begin{align*}
        \Pr(|t \land q|_M = k \mid |t|_M = m) 
        &:= \frac{\Pr(|t \land q|_M = k \text{ and } |t|_M = m)}{\Pr(|t|_M = m)} \\
        &=  \frac{\binom{n}{m} \binom{m}{k} \mu(t \land q)^k \mu(t \land \neg q)^{m-k} (1 - \mu(t))^{n - m}}{\binom{n}{m} \mu(t)^m (1 - \mu(t))^{n-m}} \\
        &= \frac{\binom{m}{k} \mu(t \land q)^k \mu(t \land \neg q)^{m - k}}{\mu(t)^m} \\
        &= \binom{m}{k} \bigg(\frac{\mu(t\land q)}{\mu(t)}\bigg)^k \bigg(\frac{\mu(t\land \neg q)}{\mu(t)}\bigg)^{m - k}
    \end{align*}
    Since $\mu(t \land q)/\mu(t) = \mu(q|t)$ and  $\mu(t \land \neg q)/\mu(t) = 1 - \mu(q|t)$, we are done.  
\end{proof}

\subsection{Proof that (\ref{eq:simplified_expected_empirical_error}) equals (\ref{eq:alternative_expected_empirical_error_formula})}

\begin{proof}
We want to establish the following identity:
\begin{align*}
    &\frac{1}{n} \sum_{t \in T_\tau} \sum_{m = 0}^n \mathbb{E}\min\{X_{m,t,q}, m - X_{m,t,q}\} \cdot \Pr(|t|_M = m) \\
    &= \frac{1}{2} - \frac{1}{n} \sum_{t \in T_\tau} \sum_{m = 0}^n \mathbb{E}\bigg|X_{m,t,q} - \frac{m}{2}\bigg| \cdot \Pr(|t|_M = m)
\end{align*}
Applying the following identity
\[\min\{a,b\} = \frac{1}{2}(a + b - |a - b|)\]
on (\ref{eq:simplified_expected_empirical_error}) gives us the following chain of equalities
\begin{align*}
    &\frac{1}{n} \sum_{t \in T_\tau} \sum_{m = 0}^n \mathbb{E}\min\{X_{m,t,q}, m - X_{m,t,q}\} \cdot \Pr(|t|_M = m) \\
    &= \frac{1}{n} \sum_{t \in T_\tau} \sum_{m = 0}^n \bigg(\frac{1}{2} m - \mathbb{E}\bigg|X_{m,t,q} - \frac{m}{2}\bigg|\bigg) \cdot \Pr(|t|_M = m) \\
    &= \frac{1}{2n} \sum_{t \in T_\tau} \sum_{m = 0}^n m \cdot \Pr(|t|_M = m) - \frac{1}{n} \sum_{t \in T_\tau} \sum_{m = 0}^n \mathbb{E}\bigg|X_{m,t,q} - \frac{m}{2}\bigg| \cdot \Pr(|t|_M = m)
\end{align*}
Now observe that $\sum_{m = 0}^n m \cdot \Pr(|t|_M = m)$ is simply the expected value of $|t|_M$, which is just $n\mu(t)$, since $|t|_M$ was distributed according to $\mathrm{Binom}(n,\mu(t))$. This, together with the observation that $\sum_{t \in T_\tau} \mu(t) = 1$, entails that the previous formula simplifies to (\ref{eq:alternative_expected_empirical_error_formula}).
\end{proof}

\subsection{Proof of Theorem \ref{thm:expected_empirical_error_lower_bound}}

\begin{proof}
    Jensen's inequality implies that
    \[\mathbb{E}\bigg|X_{m,t,q} - \mathbb{E}(X_{m,t,q})\bigg| \leq \sqrt{\mathrm{Var}(X_{m,t,q})} = \sqrt{m\mu(q|t)(1 - \mu(q|t))},\]
    since $X_{m,t,q}$ was distributed according to $\mathrm{Binom}(m,\mu(t))$. Using this together with triangle inequality we obtain that
    \begin{align*}
    \mathbb{E}\bigg|X_{m,t,q} - \frac{m}{2}\bigg| 
    &\leq \mathbb{E}\bigg|X_{m,t,q} - \mathbb{E}(X_{m,t,q})\bigg| + \mathbb{E}\bigg|\mathbb{E}(X_{m,t,q}) - \frac{m}{2}\bigg| \\
    &\leq \sqrt{m\mu(q|t)(1 - \mu(q|t))} + \bigg|\mu(q|t) - \frac{1}{2}\bigg|m
    \end{align*}
    Thus we have that
    \begin{align*}
    & \frac{1}{n}\sum_{t \in T_\tau} \sum_{m=0}^n \mathbb{E}\bigg|X_{m,t,q} - \frac{m}{2}\bigg| \cdot \Pr(|t|_M = m) \\
    &\leq \frac{1}{n} \sum_{t \in T_\tau} \sqrt{\mu(q|t)(1 - \mu(q|t))} \underbrace{\sum_{m = 0}^n \sqrt{m} \Pr(|t|_M = m)}_{= \mathbb{E}(\sqrt{|t|_M})} \\
    &+ \frac{1}{n} \sum_{t \in T_\tau} \bigg|\mu(q|t) - \frac{1}{2}\bigg| \underbrace{\sum_{m = 0}^n m \Pr(|t|_M = m)}_{= \mathbb{E}(|t|_M) = n\mu(t)}\\
    &= \frac{1}{n} \sum_{t \in T_\tau} \sqrt{\mu(q|t)(1 - \mu(q|t))} \mathbb{E}(\sqrt{|t|_M}) + \sum_{t \in T_\tau} \bigg|\mu(q|t) - \frac{1}{2}\bigg| \mu(t)
    \end{align*}
    Using again Jensen's inequality, we see that $\mathbb{E}(\sqrt{|t|_M}) \leq \sqrt{\mathbb{E}(|t|_M)} = \sqrt{n\mu(t)}$. Hence, we can bound (\ref{eq:alternative_expected_empirical_error_formula}) from below by
    \[\bigg(\frac{1}{2} - \sum_{t \in T_\tau} \bigg|\mu(q|t) - \frac{1}{2}\bigg|\mu(t)\bigg) - \frac{1}{\sqrt{n}} \sum_{t \in T_\tau} \sqrt{\mu(q|t)(1 - \mu(q|t)) \mu(t)}\]
    To conclude the proof, we note that
    \begin{align*}
    \frac{1}{2} - \sum_{t \in T} \bigg|\mu(q|t) - \frac{1}{2}\bigg|\mu(t)
    &= \sum_{t \in T} \frac{1}{2} \mu(t) - \sum_{t \in T} \bigg|\mu(q|t) - \frac{1}{2}\bigg|\mu(t) \\
    &= \sum_{t \in T} \bigg(\frac{1}{2} - \bigg|\mu(q|t) - \frac{1}{2}\bigg|\bigg)\mu(t) \\
    &= \sum_{t \in T} \min\{\mu(q|t),1-\mu(q|t)\} \mu(t) \\
    &= \sum_{t \in T} \min\{\mu(t \land q), \mu(t \land \neg q)\}
    \end{align*}  
\end{proof}

\subsection{Results required for the proof of Theorem \ref{thm:upper_bound_expected_empirical_error}} 
$\empty{}$ \\
 We start with the following proposition which implies that (\ref{eq:simplified_expected_empirical_error}) (and hence $\mathbb{E}(\mathrm{err}(M))$) is maximized when $\mu(q|t) = 1/2$, for every $t \in T_\tau$.

\begin{proposition}
    Suppose that $X \sim \mathrm{Binom}(n,1/2), Y \sim \mathrm{Binom}(n,p)$. Then
    \[\mathbb{E} \min\{Y, n - Y\} \leq \mathbb{E}\min\{X,n-X\}\]
\end{proposition}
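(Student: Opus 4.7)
The plan is to reduce the claim to a one-variable concavity statement about the polynomial
\[
g(p) := \mathbb{E}_{Y \sim \mathrm{Binom}(n,p)}\min\{Y, n-Y\},
\]
so that the inequality to be proved becomes simply $g(p) \leq g(1/2)$ for every $p \in [0,1]$.

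First I would record the symmetry $g(p) = g(1-p)$, which is immediate from the observation that $n - Y \sim \mathrm{Binom}(n, 1-p)$ whenever $Y \sim \mathrm{Binom}(n,p)$ and that $\min\{Y, n-Y\} = \min\{n-Y, Y\}$. Since any concave function on $[0,1]$ that is symmetric about $1/2$ attains its maximum at the point of symmetry, it will then suffice to show that $g$ is concave.

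To establish concavity, I would invoke the standard identity
\[
\frac{d^2}{dp^2}\,\mathbb{E}_p f(Y) \;=\; n(n-1)\,\mathbb{E}_{Y'' \sim \mathrm{Binom}(n-2,p)}\,\Delta^2 f(Y''),
\]
where $\Delta^2 f(k) := f(k+2) - 2f(k+1) + f(k)$, applied to $f(k) := \min\{k, n-k\}$. A short case analysis shows that $\Delta^2 f(k) \leq 0$ for every $0 \leq k \leq n-2$: the function $f$ is piecewise linear with slope $+1$ on $\{0,\dots,\lfloor n/2 \rfloor\}$ and slope $-1$ on $\{\lceil n/2 \rceil,\dots,n\}$, so $\Delta^2 f$ vanishes away from the ``kink'' near $n/2$ and is strictly negative there (equal to $-2$ at $k = n/2 - 1$ when $n$ is even, and $-1$ at each of $k = (n-3)/2$ and $k = (n-1)/2$ when $n$ is odd). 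Hence $g''(p) \leq 0$ on $[0,1]$.

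The main obstacle is just recalling and verifying the second-derivative identity above; once in hand, everything else is a routine case check. That identity itself follows from applying the first-derivative identity
\[
\frac{d}{dp}\,\mathbb{E}_p f(Y) \;=\; n\,\mathbb{E}_{Y' \sim \mathrm{Binom}(n-1,p)}\,\Delta f(Y')
\]
twice, which in turn is a direct consequence of differentiating $\binom{n}{k}p^k(1-p)^{n-k}$ term by term and reindexing using $k\binom{n}{k} = n\binom{n-1}{k-1}$ and $(n-k)\binom{n}{k} = n\binom{n-1}{k}$. Combined with the symmetry $g(p) = g(1-p)$, the concavity $g''(p) \leq 0$ yields $g(p) \leq g(1/2)$, which is the desired inequality.
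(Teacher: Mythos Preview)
Your proof is correct and takes a genuinely different route from the paper. The paper pairs the terms $k$ and $n-k$ in the expectation and asserts that for each fixed $k\le\lfloor n/2\rfloor$ the symmetrised weight $h_k(p)=p^k(1-p)^{n-k}+p^{n-k}(1-p)^k$ is maximised on $[0,1]$ at $p=1/2$; you instead show that the whole function $g(p)=\mathbb{E}\min\{Y,n-Y\}$ is concave (via the second-difference identity $\frac{d^2}{dp^2}\mathbb{E}_p f(Y)=n(n-1)\,\mathbb{E}_{\mathrm{Binom}(n-2,p)}\Delta^2 f$) and symmetric, hence maximised at $1/2$. Your approach is in fact the more robust one: the paper's term-by-term claim is false as stated --- for instance with $n=5$, $k=1$ one computes $h_1(1/2)=1/16=0.0625$ but $h_1(0.2)=0.2\cdot 0.8^4+0.2^4\cdot 0.8\approx 0.0832$ --- so the argument does not go through term by term. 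What \emph{is} true, and what your discrete-concavity computation establishes cleanly, is that the full weighted sum $\sum_k k\binom{n}{k}h_k(p)$ is concave in $p$ and therefore maximised at $p=1/2$, even though individual summands need not be.
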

\begin{proof}
    For simplicity we will assume that $n$ is odd. Now we can rewrite the left-hand side of the claimed inequality as follows.
    \begin{align*}
        \mathbb{E} \min\{Y, n - Y\}
        &= \sum_{k = 0}^n \min\{k, n - k\} \cdot \Pr(Y = k) \\
        &= \sum_{k = 0}^{\lfloor n/2 \rfloor} \min\{k, n - k\} \cdot (\Pr(Y = k) + \Pr(Y = n - k)) \\
        &= \sum_{k = 0}^{\lfloor n/2 \rfloor} \min\{k, n - k\} \cdot \binom{n}{k} \cdot (p^k(1 - p)^{n - k} + p^{n - k} (1 - p)^k)
    \end{align*}
    The function $p^k(1 - p)^{n-k} + p^{n - k}(1 - p)^k$ is maximized on the interval $[0,1]$ when $p = 1/2$. Hence
    \begin{align*}
        &\sum_{k = 0}^{\lfloor n/2 \rfloor} \min\{k, n - k\} \cdot \binom{n}{k} \cdot (p^k(1 - p)^{n - k} + p^{n - k} (1 - p)^k) \\
        &\leq  \sum_{k = 0}^{\lfloor n/2 \rfloor} \min\{k, n - k\} \cdot \binom{n}{k} \cdot \bigg(\bigg(\frac{1}{2}\bigg)^n + \bigg(\frac{1}{2}\bigg)^n\bigg) \\
        &= \sum_{k = 0}^{\lfloor n/2 \rfloor} \min\{k, n - k\} \cdot (\Pr(X = k) + \Pr(X = n - k)) \\
        &= \sum_{k = 0}^n \min\{k, n - k\} \cdot \Pr(X = k) \\
        &= \mathbb{E} \min\{X,n-X\},
    \end{align*}
    which is what we wanted to show.  
\end{proof}

\begin{lemma}\label{lemma:key_estimate}
    Fix a positive integer $n \geq 1$ and let $X \sim \mathrm{Binom}(n,1/2)$. Then
    \[\mathbb{E}\bigg|X - \frac{n}{2}\bigg| \geq \sqrt{\frac{n}{8}}\]
\end{lemma}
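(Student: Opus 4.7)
The plan is to compute $\mathbb{E}|X - n/2|$ in closed form and then lower-bound the resulting central binomial coefficient. I would split the argument according to the parity of $n$. For even $n = 2m$, by the symmetry of $\mathrm{Binom}(2m,1/2)$,
\[
\mathbb{E}|X - m| = \frac{2}{2^{2m}}\sum_{k=0}^{m-1}(m-k)\binom{2m}{k}.
\]
To simplify the sum, I would use the identity $k\binom{2m}{k} = 2m\binom{2m-1}{k-1}$ together with the partial-sum identities $\sum_{k=0}^{m-1}\binom{2m}{k} = 2^{2m-1} - \frac{1}{2}\binom{2m}{m}$ and $\sum_{k=0}^{m-1}\binom{2m-1}{k} = 2^{2m-2}$, and the well-known relation $\binom{2m-1}{m-1} = \frac{1}{2}\binom{2m}{m}$. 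After the cancellations one arrives at the compact de~Moivre-style formula
\[
\mathbb{E}|X - m| = \frac{m}{4^m}\binom{2m}{m}.
\]
An analogous computation (starting from $\mathbb{E}|X - (2m+1)/2| = 2 \cdot 2^{-2m-1}\sum_{k\le m}((2m+1)/2 - k)\binom{2m+1}{k}$, then using $\binom{2m+1}{m+1} = \frac{2m+1}{m+1}\binom{2m}{m}$) yields for odd $n = 2m+1$ the formula $\mathbb{E}|X - n/2| = (2m+1)\binom{2m}{m}/2^{2m+1}$.

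The second step is to establish the central-binomial lower bound $\binom{2m}{m} \geq 4^m/(2\sqrt{m})$ for $m \geq 1$, which I would prove by induction on $m$. The base case $m = 1$ is equality. For the inductive step, the identity $\binom{2m+2}{m+1} = \binom{2m}{m}\cdot 2(2m+1)/(m+1)$ reduces the desired bound to $(2m+1)/(2\sqrt{m}) \geq \sqrt{m+1}$, which after squaring is the trivial inequality $(2m+1)^2 \geq 4m(m+1)$.

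Combining the closed forms with the central-binomial bound then finishes the lemma. For even $n = 2m$, directly $\mathbb{E}|X - n/2| \geq m/(2\sqrt{m}) = \sqrt{m/4} = \sqrt{n/8}$. For odd $n = 2m+1$ with $m \geq 1$, we obtain $\mathbb{E}|X - n/2| \geq (2m+1)/(4\sqrt{m})$, and squaring reduces the comparison with $\sqrt{(2m+1)/8}$ to $2m+1 \geq 2m$. The remaining base case $n = 1$ is immediate, since $|X - 1/2| \equiv 1/2 \geq 1/\sqrt{8}$. The main obstacle I expect is organizing the closed-form computation of $\mathbb{E}|X - n/2|$ cleanly: the steps are elementary but require combining several standard binomial identities without slipping on off-by-one terms. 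Once the closed forms are in hand, the induction and the final comparisons are routine algebra, and crucially the induction yields exactly the constant $1/\sqrt{8}$ needed for the subsequent proof of Theorem \ref{thm:upper_bound_expected_empirical_error}.
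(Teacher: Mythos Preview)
Your argument is correct. The closed forms $\mathbb{E}|X-m|=\tfrac{m}{4^m}\binom{2m}{m}$ for $n=2m$ and $\mathbb{E}|X-\tfrac{n}{2}|=\tfrac{2m+1}{2^{2m+1}}\binom{2m}{m}$ for $n=2m+1$ are right, the induction establishing $\binom{2m}{m}\geq 4^m/(2\sqrt{m})$ is clean (the reduction to $(2m+1)^2\geq 4m(m+1)$ is exactly what is needed), and the final comparisons in both parities go through as you describe.

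The paper takes a very different route: it does not prove the inequality at all for $n\geq 2$, but simply invokes Lemma~2.3 of \cite{PELEKIS2016305}, handling only the trivial case $n=1$ by hand. Your proof is therefore strictly more informative---it is self-contained and elementary, relying only on standard binomial identities and an easy induction, whereas the paper outsources the substance to an external reference. The trade-off is length: the paper's proof is two lines, while yours requires the parity split and several identities to be assembled carefully. If the goal is to keep the appendix short, the citation suffices; if one wants the paper to stand on its own, your argument is the natural replacement and delivers exactly the constant $1/\sqrt{8}$ that Theorem~\ref{thm:upper_bound_expected_empirical_error} needs.
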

\begin{proof}
    For $n = 1$ a straightforward calculation shows that
    \[\mathbb{E}\bigg|X - \frac{n}{2}\bigg| = \frac{1}{2} \geq \sqrt{\frac{1}{8}}\]
    For $n \geq 2$ this is Lemma 2.3 in \cite{PELEKIS2016305}.  
\end{proof}

\begin{lemma}\label{lemma:square-one}
    Let $X \sim \mathrm{Binom}(n,p)$. Then
    \[\mathbb{E}(\sqrt{X}) \geq \sqrt{np} - \frac{1 - p}{2\sqrt{np}}\]
\end{lemma}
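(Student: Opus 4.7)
The plan is to prove the inequality by establishing a suitable pointwise lower bound on $\sqrt{x}$ and then taking expectation. Specifically, I would aim to show that for every $x \geq 0$,
\[
\sqrt{x} \;\geq\; \sqrt{np} \;+\; \frac{x - np}{2\sqrt{np}} \;-\; \frac{(x - np)^2}{2(np)^{3/2}}.
\]
Once this is in place, the lemma follows immediately by taking expectations: the linear term vanishes because $\mathbb{E}(X) = np$, and the quadratic term yields $\mathrm{Var}(X)/(2(np)^{3/2}) = np(1-p)/(2(np)^{3/2}) = (1-p)/(2\sqrt{np})$, giving exactly the claimed bound.

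To prove the pointwise inequality, I would substitute $y = x/(np)$ to rescale it to the single unknown form
\[
\sqrt{y} \;\geq\; 1 + \tfrac{1}{2}(y - 1) - \tfrac{1}{2}(y - 1)^2 \qquad \text{for all } y \geq 0.
\]
This is essentially a second-order Taylor expansion of $\sqrt{y}$ around $y = 1$, but with the quadratic coefficient inflated from $1/8$ (the true Taylor coefficient) to $1/2$, which is exactly the slack needed to get a one-sided bound valid on all of $[0,\infty)$ despite the unboundedness of the second derivative of $\sqrt{\cdot}$ near zero.

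The main obstacle is verifying that quadratic lower bound globally, since a naive Taylor remainder argument fails near $y = 0$. The cleanest route, I expect, is the algebraic one: setting $u = \sqrt{y} \geq 0$ and writing the difference as
\[
\sqrt{y} - 1 - \tfrac{1}{2}(y-1) + \tfrac{1}{2}(y-1)^2 \;=\; \tfrac{1}{2}\, u (u-1)^2 (u+2),
\]
which is manifestly non-negative for $u \geq 0$. This identity can be checked by expanding both sides, or derived by factoring out $(u-1)$ twice from the left-hand side after substituting $y = u^2$ and $y - 1 = (u-1)(u+1)$. A calculus-based alternative (examining critical points of $h(y) := \sqrt{y} - 1 - (y-1)/2 + (y-1)^2/2$ and checking that $h \geq 0$) also works but is messier; the factorization is preferable because it makes non-negativity transparent.

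Putting it all together: establish the factorization identity, scale up by $\sqrt{np}$ to obtain the pointwise bound in $x$, and take expectations to conclude. No issues with the case $X = 0$ arise since both sides of the pointwise inequality are defined at $x = 0$, and we may assume $p > 0$ (otherwise $X \equiv 0$ and the right-hand side of the lemma is $-\infty$, trivially bounded by $\mathbb{E}(\sqrt{X}) = 0$).
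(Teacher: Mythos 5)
Your proposal is correct and follows essentially the same route as the paper: the pointwise bound $\sqrt{y} \geq 1 + \tfrac{1}{2}(y-1) - \tfrac{1}{2}(y-1)^2$ applied to $y = X/\mathbb{E}(X)$, followed by taking expectations. The only difference is that you supply the explicit factorization $\tfrac{1}{2}u(u-1)^2(u+2)$ (which is correct) where the paper merely asserts the inequality holds by ``a direct computation,'' so your write-up is if anything slightly more complete.
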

\begin{proof}
    We will follow the argument sketched in \cite{MathOverFlowAnswer}. First, we start with the following inequality.
    \[\sqrt{x} \geq 1 + \frac{x - 1}{2} - \frac{(x - 1)^2}{2},\]
    A direct computation shows that this inequality is valid for all $x \geq 0$. We note that this inequality is suggested by looking at the Taylor series of $\sqrt{x}$ around $x = 1$, where the first three terms are
    \[1 + \frac{x - 1}{2} - \frac{(x - 1)^2}{8}\]
    Now we replace $x$ with $\frac{X}{\mathbb{E}(X)}$ in this inequality to get 
    \[\mathbb{E}(\sqrt{X}) \geq \sqrt{\mathbb{E}(X)}\bigg(1 - \frac{\mathrm{Var}(X)}{2\mathbb{E}(X)^2}\bigg)\]
    Since $X \sim \mathrm{Binom}(n,p)$ we know that $\mathbb{E}(X) = np$ and that $\mathrm{Var}(X) = np(1 - p)$. Plugging these values in the above inequality yields the desired result.  
\end{proof}

The above lemma gives meaningful estimates only when $p$ is not too small when compared to $n$. When $pn < 1$ we will use the following alternative bound.

\begin{lemma}\label{lemma:square-two}
    Let $X \sim \mathrm{Binom}(n,p)$. Then
    \[\mathbb{E}(\sqrt{X}) \geq np(1 - p)^n\]
\end{lemma}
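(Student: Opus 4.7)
The plan is to obtain this bound by retaining only a single term from the binomial expansion of $\mathbb{E}(\sqrt{X})$. Writing
\[\mathbb{E}(\sqrt{X}) = \sum_{k=0}^n \sqrt{k}\binom{n}{k} p^k (1-p)^{n-k},\]
every summand is non-negative, so I can lower-bound the sum by keeping only the $k = 1$ contribution, which equals $n p (1-p)^{n-1}$. This is already almost the advertised estimate; the only remaining step is to convert the exponent $n-1$ into $n$, which I would do using the trivial observation $(1-p)^{n-1} \geq (1-p)^n$ (valid since $1-p \in [0,1]$). Chaining these two inequalities yields $\mathbb{E}(\sqrt{X}) \geq np(1-p)^n$.

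The motivation for this particular one-line argument becomes clear by contrast with the companion Lemma \ref{lemma:square-one}: the estimate $\sqrt{np} - \frac{1-p}{2\sqrt{np}}$ there becomes vacuous (even negative) precisely when $np < 1$, which is exactly the regime in which Theorem \ref{thm:upper_bound_expected_empirical_error} invokes Lemma \ref{lemma:square-two}. When $np$ is small, the mass of the binomial distribution is concentrated near $0$, so most of the expectation of $\sqrt{X}$ is carried by the event $X = 1$; isolating that single term is therefore the natural thing to try.

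I do not expect any real obstacle here: the proof is essentially a two-step chain of elementary inequalities and does not require tools beyond non-negativity of the binomial summands and monotonicity of $(1-p)^k$ in $k$. The only ``choice'' is the decision to isolate the $k = 1$ term rather than attempt a concentration-type argument, and that choice is dictated entirely by the small-$np$ regime in which the lemma will be applied.
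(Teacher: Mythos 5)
Your proof is correct and is essentially the paper's own argument: both isolate the $k=1$ term of $\mathbb{E}(\sqrt{X}) = \sum_k \sqrt{k}\Pr(X=k)$ and discard the rest by non-negativity. In fact you are slightly more careful than the paper, which writes $\Pr(X=1) = np(1-p)^n$ where the correct value is $np(1-p)^{n-1}$; your explicit final step $(1-p)^{n-1} \geq (1-p)^n$ is exactly what is needed to repair that slip.
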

\begin{proof}
    \begin{align*}
        \mathbb{E}(\sqrt{X}) = \sum_{k = 0}^n \sqrt{k} \Pr(X = k) \geq \sqrt{1} \cdot \Pr(X = 1) = np(1-p)^n
    \end{align*}  
\end{proof}

\subsection{Proof of Theorem \ref{thm:upper_bound_expected_empirical_error}}

\begin{proof}
    With our assumption we know that $X_{m,t,q} \sim \mathrm{Binom}(m,1/2)$. Lemma \ref{lemma:key_estimate} implies that
    \[\mathbb{E}\bigg|X_{m,t,q} - \frac{m}{2}\bigg| \geq \sqrt{\frac{m}{8}}\]
    Thus we have that
    \begin{align*}
    &\frac{1}{n} \sum_{t \in T_\tau} \sum_{m = 0}^n \mathbb{E}\bigg|X_{m,t,q} - \frac{m}{2}\bigg| \cdot \Pr(|t|_M = m) \\
    &\geq \frac{1}{n\sqrt{8}} \sum_{t \in T_\tau} \sum_{m = 0} \sqrt{m} \cdot \Pr(|t|_M = m) \\
    &= \frac{1}{n\sqrt{8}} \sum_{t \in T_\tau} \mathbb{E}(\sqrt{|t|_M})
    \end{align*}
    By splitting the sum $\sum_{t \in T_\tau} \mathbb{E}(\sqrt{|t|_M})$ into two parts and using lemmas \ref{lemma:square-one} and \ref{lemma:square-two} respectively we can bound the last line from below using
    \[\frac{1}{\sqrt{8n}}\sum_{n\mu(t) \geq 1} \sqrt{\mu(t)} - \frac{1}{2\sqrt{8}n}\sum_{n\mu(t) \geq 1} \frac{1 - \mu(t)}{\sqrt{n\mu(t)}} + \frac{1}{\sqrt{8}} \sum_{n\mu(t) < 1} \mu(t)(1 - \mu(t))^n,\]
    which concludes the proof. 
\end{proof}

\end{document}